\pgfplotsset{compat=newest}
\pgfplotsset{plot coordinates/math parser=false}
\newcommand{\R}{\mathbb R}
\newcommand{\N}{\mathbb N}
\newcommand{\ul}{u_{\text{lead}}}
\newcommand{\xl}{x_{\ell}}
\newcommand{\xlz}{x_{\ell,0}}
\newcommand{\vl}{v_{\ell}}
\newcommand{\vlz}{v_{\ell,0}}
\newcommand{\bv}{\boldsymbol{v}}
\newcommand{\eps}{\epsilon}
\renewcommand{\phi}{\varphi}
\newcommand{\bx}{\boldsymbol{x}}
\newcommand{\Acc}{\mathrm{Acc}}
\DeclareMathOperator{\Accfront}{\ensuremath{\mathrm{Acc_{front}}}}
\renewcommand{\:}{\mathrel{\coloneqq}}
\newcommand{\equivd}{\ensuremath{\vcentcolon\equiv}}
\newcommand{\dd}{\ensuremath{\,\mathrm{d}}}
\DeclareMathOperator{\sgn}{sgn}
\DeclareMathOperator*{\essinf}{ess-\inf}
\renewcommand{\epsilon}{\varepsilon}
\definecolor{golden}{RGB}{253,181,21}
\newcommand{\car}[1]{
    \draw[draw=black,fill=yellow!90!black,thick,color=yellow!90!black] (1.5,.65) circle (.03);
        \draw[draw=black,fill=red!80!black,thick,color=red!80!black] (
    0.0,.65) circle (.03);
    
    \draw[black]
        [draw=black,fill=black,rounded corners=0.0ex, thin](0,0.52)--(-0.05,0.52)--(-0.05,0.54)--(0.0,0.54);
        
        \draw[draw=black,fill=white!50!black,thick,color=white!50!black] (
    -0.15,.56) circle (.025);    
        \draw[draw=black,fill=white!50!black,thick,color=white!50!black] (
    -0.13,.54) circle (.015); 
        \draw[draw=black,fill=white!50!black,thick,color=white!50!black] (
    -0.10,.53) circle (.01);     
    
    \shade[top color=#1!70!black, bottom color=white, shading angle={135}]
        [draw=black,fill=red!20,rounded corners=0.2ex, thin](0,0.5)--(1.5,0.5)--(1.5,0.75)--(1.3,0.75)--(1,1.0)--(0.5,1)--(0.1,0.75)--(0.0,0.75)--(0,0.5);
    \shade[top color=blue!40!white, bottom color=white, shading angle={135}]
        [draw=black,fill=red!20,rounded corners=0.2ex, thin](0.35,0.8)--(0.75,0.8)--(0.75,0.95)--(0.55,0.95)--(0.35,0.8);
    \shade[top color=blue!30!white, bottom color=white, shading angle={135}]
        [draw=black,fill=red!20,rounded corners=0.2ex, thin](0.85,0.8)--(1.15,0.8)--(0.95,0.95)--(0.85,0.95)--(0.85,0.8);
    \draw[draw=black,fill=gray!50,thick] (0.3,.5) circle (.12);
    \draw[draw=black,fill=gray!50,thick] (1.2,.5) circle (.12);    
    \draw[draw=black,fill=gray!80,semithick] (0.3,.5) circle (.1);
    \draw[draw=black,fill=gray!80,semithick] (1.2,.5) circle (.1);
}
\def\mathcolor#1#{\@mathcolor{#1}}
\def\@mathcolor#1#2#3{%
  \protect\leavevmode
  \begingroup
    \color#1{#2}#3%
  \endgroup
}
\title{Limitations and Improvements of the Intelligent Driver Model (IDM)}
\author{Saleh Albeaik\thanks{Department of Civil and Environmental Engineering, University of California, Berkeley, 94720 Berkeley, CA, USA}, \email{albeaik@berkeley.edu} \and Alexandre Bayen\thanks{Institute for Transportation Studies (ITS), University of California, Berkeley, 94720 Berkeley, CA, USA}, \email{bayen@berkeley.edu} \and Maria Teresa Chiri\thanks{Department of Mathematics, Penn State University, 16802 University Park, PA, USA}, \email{mxc6028@psu.edu} \and Xiaoqian Gong\thanks{School of Mathematical and Statistical Science, Arizona State University, 85281 Tempe, AZ, USA}, \email{xiaoqian.gong@asu.edu} \and Amaury Hayat\thanks{Centre d'Enseignement et de Recherche en Mathématiques et Calcul Scientifique (CERMICS), Ecole des Ponts, 77455 Marne-la-Vallée, France, \email{amaury.hayat@enpc.fr}
} \and Nicolas Kardous\thanks{Department of Industrial Engineering \& Operations Research, University of California Berkeley, Berkeley, CA, USA}, \email{nicolas.kardous@berkeley.edu} \and Alexander Keimer \and Sean T.~McQuade\thanks{Department of Mathematical Sciences and Center for Computational and Integrative Biology, Rutgers University, Camden, NJ, USA}, \email{sean.mcquade@rutgers.edu} \and Benedetto Piccoli\thanks{Department of Mathematical Sciences and Center for Computational and Integrative Biology, Rutgers University, Camden, NJ, USA}, \email{piccoli@camden.rutgers.edu} \and Yiling You\thanks{Department of Mathematics, University of California, Berkeley, 94720 Berkeley, CA, USA}, \email{yiling.you@berkeley.edu}
}
\begin{document}
\maketitle

\begin{abstract}
    This contribution analyzes the widely used and well-known ``intelligent driver model'' (briefly IDM), which is a second order car-following model governed by a system of ordinary differential equations. Although this model was intensively studied in recent years for properly capturing traffic phenomena and driver braking behavior, a rigorous study of the well-posedness has, to our knowledge, never been performed. First it is shown that, for a specific class of initial data, the  vehicles' velocities become negative or even diverge to \(-\infty\) in finite time, both undesirable properties for a car-following model. Various modifications of the IDM are then proposed in order to avoid such ill-posedness.  The theoretical remediation of the model, rather than \textit{post facto} by ad-hoc modification of code implementations, allows a more sound numerical implementation and preservation of the model features.
    Indeed, to avoid inconsistencies and ensure dynamics close to the one of the original model, one may need to inspect and clean large input data, which may result in practically impossible scenarios for large-scale simulations. Although well-posedness issues might only occur for specific initial data, this may happen frequently when different traffic scenarios are analyzed, and especially in presence of lane-changing, on ramps and other network components as it is the case for most commonly used micro-simulators.
    On the other side, it is shown that well-posedness can be guaranteed by straight-forward improvements, such as those obtained by slightly changing the acceleration to prevent the velocity from becoming negative.
\end{abstract}
\begin{keywords}
  IDM, Intelligent driver model, system of ODEs, discontinuous ODEs, traffic modelling, microscopic traffic modelling, car following model, well-posedness of ODEs, existence and uniqueness of solutions of ODE;
\end{keywords}

\begin{AMS}
 34A12, 34A38, 65L05, 65L08
\end{AMS}
\section{Introduction}
The field of car following modeling historically goes back to the early 1950s' (and probably before) \cite{chandler,gazis1961nonlinear}. Most of the early work in this field focused on establishing the model equations, without paying much attention to the mathematical framework required to characterize solutions to the resulting ordinary differential equations (ODEs) describing the motion of the vehicles. 
The models are mainly classified into \textit{acceleration models} for longitudinal movement, \textit{lane-changing models} for lateral movement and \textit{decisional models} for discrete-choice situations.
Among all the car-following models introduced so far (\cite{bando1995phenomenological,bexelius1968extended,brackstone1999car, orosz2005bifurcations,orosz2006subcritical} to just name a few) it is worth mentioning the Gazis-Herman-Rothery (GHR) model \cite{chandler} which determines the relative velocity between two-lane based vehicles, the Safe Distance Model \cite{GIPPS}, the Optimal Velocity Model \cite{Bando} in which the acceleration of the single vehicle is controlled according to the velocity of the leading vehicle, and the Intelligent Driver Model (IDM) which is subject of analysis in the present work.
For a comprehensive overview of the main car-following models we refer to \cite{Treiber2012,Matcha2020,hoogendoorn2001state}.

The IDM has been introduced in \cite{treiber2000congested} and is a deterministic time-continuous model describing  the dynamics of the positions and velocities of every vehicle.  Similarly to any car-following model, the idea behind it is that drivers control their vehicles to react to the stimulus from preceding vehicles.
It aims to balance two different aspects, the necessity to keep safe separation with the vehicle in front and the desire to achieve ``free flow'' speed. This model presents some peculiarities which made it subject of intense research in the last two decades. Indeed, it is constructed to be \textit{collision-free}, all the parameters can be interpreted and empirically measured, the stability of the model can be calibrated to empirical data, and there exists an equivalent macroscopic counterpart \cite{HELBING2002}.
In literature we find many extensions of the original IDM, each of which seeks to incorporate new realistic features.
	The Enhanced IDM \cite{Kesting_2010} presents an improved heuristic of the IDM useful for multi-lane simulations, which prevents the model from ``over-reactions" even when the driver of the leading vehicle suddenly brakes with the maximum possible deceleration. 
	
	The Foresighted Driver Model (FDM) starts from the IDM and assumes that a driver acts in a way that balances predictive risk (due to possible collisions along his route) with utility (time required to travel, smoothness of the ride) \cite{Eggert}.
	
	Other extensions of the IDM aim to improve the driver safety and to respect the vehicle capability \cite{derbel}, to strengthen the power of each vehicle in proportion to the immediately preceding vehicle \cite{Zhipeng}, and to incorporate the spatially varying velocity profile to account the variation in different types of maneuvers through intersection \cite{Liebner}.
	Another natural extension is given by  Multi-anticipative IDM \cite{TREIBER2006} which models the reaction of a driver to several vehicles ahead just by summing up the corresponding vehicle-vehicle pair interactions with the same weight coefficients. 
	
	More recently,  stochastic versions of the IDM have been introduced: to describe a probabilistic motion prediction applicable for long term trajectory planning \cite{Hoermann}, to study mechanisms behind traffic flow instabilities, indifference regions of finite human perception thresholds and external noise \cite{TREIBER2017}, and to incorporate context-dependent upper and lower bounds on acceleration \cite{Schulz1,Schulz2}.

Throughout the decades, most of the engineering community worked on improving the ability of the models to capture specific behavioral phenomena, at the expense of the characterization of the solutions. Thus, to this day, only a few articles use models, and corresponding solutions, that are well characterized in terms of existence, uniqueness, and regularity. An example of this practice is provided by the double integrator $\ddot{x}(t)=u\in U$, where $U$ is the input set, used abundantly as a canonical example in numerous control articles. 
On the other hand, when models have inherent flaws leading to unbounded or undefined solutions, as in the case of unbounded acceleration, ad hoc methods have been traditionally applied \textit{post facto} by engineering the numerical implementations. For instance, in commonly used microsimulation tools, such as SUMO \cite{SUMO2018}, Aimsun \cite{aimsun} and others, unbounded quantities are clipped, leading to ``acceptable'' numerical solutions. However, in the process, the fidelity to the original model is compromised, and the numerical simulations may not represent any instantiation of the model. Consequently, the properties of the considered continuous model might be lost as well. Finally, the process of clipping can introduce additional issues, not necessarily present in the original model, and prevent the definition of any theoretical models corresponding to the obtained numerical simulations. The present article thus attempts to provide a full pipeline in which the model is first mathematically well defined (including existence, uniqueness and regularity characterization of the solutions), and then numerically implemented using appropriate numerical differentiation schemes. The final achievement is a thorough correspondence between theory and implementation. The well-posedness of car-following models is also fundamental to model the traffic flow from the mean-field perspective, see \cite{fornasier2014mean,gong2020mean}.

\subsection{The aim of this contribution}
The introduced IDM has two mathematical and modelling drawbacks:
\begin{itemize}
    \item The velocities of specific vehicles might become negative at specific times, which might not be desirable from a modelling point of view. 
    \item The velocities of specific vehicles might diverge to \(-\infty\) in finite time, so that the solution of the system of ODE's ceases to exist. 
\end{itemize}
We will discuss these drawbacks and determine under which conditions on the initial datum and parameters they can happen. We will present several improvements so that the solutions exist on every finite time horizon.
 \subsection{Structure of this article}
 	The paper is organized in the following way. In \Cref{sec:IDM} we review the classical Intelligent Driver Model (IDM) and describe briefly the physical meaning of the parameters involved. Well-posedness of the IDM for small time horizon is stated in \cref{theo:well_posedness_small_time_horizon}. In \Cref{sec:counterexamples} we analyze peculiar and possibly pathological behaviors of the model. Specifically, we provide explicit settings in which it produces negative velocities (\cref{ex:negative_velocity}), negative velocities and blow-up of the solution in finite time (\cref{ex:negative_velocity_blowup}, \cref{ex:negative_velocity_blow_down}).
 	\Cref{sec:well_pos_spec_dat_times} collects some of the main results of this work. Existence and uniqueness of a solution for small times with lower bound on the distance which can be interpreted as ``collision free.''
 	\Cref{sec:IDM_improvements} is devoted to the exploration, analysis and comparison of adjustments to the classic IDM in order to avoid the problems mentioned in \Cref{sec:counterexamples} for general initial data. To this end, we introduce modified versions of the IDM for which well-posedness is proved: the \textit{projected IDM}, the \textit{acceleration projected IDM}, and the \textit{velocity regularized acceleration IDM} defined respectively in  \cref{defi:IDM_projected_velocities}, \cref{defi:IDM_projected_velocities_acceleration}, \cref{defi:IDM_velocity_regularized}.
 	 A further and more drastic adjustment to the classic model is proposed in \cref{defi:IDM_discontinuous} which involves a discontinuous acceleration, and therefore is denoted as \textit{discontinuous improvement}. 
 	 In \Cref{sec:multi_vehicle} we present some well-posedness results for the many vehicle case (based on the previous analysis in \Cref{sec:IDM_improvements}, and
 	 finally, in \Cref{sec:conclusions} we draw conclusions from our work and mention possible research directions opened by this contribution.

 \section{The intelligent driver model (IDM): Definitions and basic results}\label{sec:IDM}
 In this section we introduce the intelligent driver model (IDM) as the following system of ordinary differential equations. To this end, we require to define the acceleration function as follows:
\begin{definition}[The IDM acceleration]\label{defi:IDM_acc}
Let $T\in\R_{>0}$ be fixed. For a parameter set \((a,b,v_{\text{free}},\tau,s_{0},l,\delta)\in\R^{3}_{>0}\times(0,T)\times\R_{>0}^{2}\times \R_{>1}\) we define the following IDM car-following acceleration on the set
\begin{align*}
    \mathcal{A}&\:\{(x,v,\xl,\vl)\in\R^{4}:\  \xl-x-l>0\},\\
    \Acc&:\begin{cases}\mathcal{A}&\rightarrow\R\\
    (x,v,\xl,\vl)&\mapsto a\bigg(1-\big(\tfrac{|v|}{v_{\text{free}}}\big)^{\delta}-\Big(\tfrac{2\sqrt{ab}(s_{0}+v\tau)+v(v-\vl)}{2\sqrt{ab}(\xl-x-l)}\Big)^{2}\bigg).
    \end{cases}
\end{align*}
\end{definition}
\begin{remark}[Absolute values in the IDM Acceleration]
It is worth mentioning that in most literature the parameter \(\delta\) is not precisely specified except that it is assumed to be positive. However, as we will show, velocities can become negative, and this is why we assumed that the acceleration term in \cref{defi:IDM_acc} involves the absolute value of the velocity so that
\begin{equation}
\big(\tfrac{|v|}{v_{\text{free}}}\big)^{\delta}\label{eq:42}
\end{equation}
is well defined for all \(v\in\R\) and \(\delta\in\R_{>0}\). Obviously, this is only one choice, and it might be more reasonable to replace it by
\begin{equation}
\sgn(v)\big(\tfrac{|v|}{v_{\text{free}}}\big)^{\delta}\label{eq:42_1}
\end{equation}
so that this term contributes positive to the acceleration for negative velocities and indeed counteracts a negative velocity. For \(\delta\in 2\N_{\geq1}+1\) \cref{eq:42_1} can actually be replaced by the version without the absolute value and the same is true for \(\delta\in 2\N_{\geq1}\) with the drawback that this part of the acceleration will always remain negative as we also assume for now in \cref{defi:IDM_acc}.

An analysis similar to the one in this paper can then be carried out with mentioning that although in this case the solution's velocity can diverge to \(-\infty\).
\end{remark}
As we will require for the leader a specific acceleration, the ``free-flow acceleration,'' we define as follows
\begin{definition}[Free flow acceleration]\label{defi:free_flow}
For \((a,v_{\text{free}},\delta)\in\R_{>0}^{2}\times\R_{>1}\) the \textbf{free flow} acceleration is defined by
\begin{align*}
    \Accfront:&\begin{cases}\R^{2}&\rightarrow\R\\
   (x,v)&\mapsto a\Big(1-\big(\tfrac{|v|}{v_{\text{free}}}\big)^{\delta}\Big).
   \end{cases}
   \end{align*}
\end{definition}
Having defined the acceleration function, we are ready to present \cref{defi:IDM_model}:
\begin{definition}[The IDM]\label{defi:IDM_model}
Given \cref{defi:IDM_acc}, we call the following system of ordinary differential equations in position \(\bx=(\xl, x):[0,T]\rightarrow\R^{2}\) and velocity \(\bv=(\vl, v):[0,T]\rightarrow\R^{2}\)
\begin{equation}
\begin{aligned}
    \dot{\xl}(t) &=\vl(t), && t \in [0,T],\\
    \dot{\vl}(t)&=\ul(t),&& t\in[0,T],\\
    \dot{x}(t) &=v(t), && t \in [0,T],\\
   \dot{v}(t)&=\Acc(x(t),v(t),\xl(t),\vl(t)), && t\in[0,T], \\
    (\xl(0),x(0)) &=(\xlz,x_{0}),\\
    (\vl(0),v(0)) &= (\vlz,v_{0}),
\end{aligned}
\end{equation}
with leading dynamics \(\ul:[0,T]\rightarrow\R\) the car-following \textbf{IDM}. \((x_0, \xlz, v_0, \vlz)\in\R^{2}\times\R_{\geq 0}^{2}\) are initial positions and velocities. 
\end{definition}
This is schematically illustrated in \cref{fig:car_following}.
\begin{figure}
    \centering
\begin{tikzpicture}[scale=0.6]
 \draw[fill,color=gray!20!white](-.5,-1)--(-0.5,2.25)--(7,2.25)--(7,-1)--cycle;
 \draw[->] (-0.5,-1) -- (-0.5,2.5);
\draw (-0.75,1) node {\(t\)};
\draw[->] (-0.5,-1) -- (7.25,-1);
\draw[dashed,blue] (1.25,-0.5) -- (4.5,1.75);
\draw [dashed,red] plot [smooth] coordinates { (4,-0.5) (5.25,0.5) (6.25,1.5)};
\begin{scope}[shift={(1.5,0)}]\car{blue}\end{scope}
  \begin{scope}[shift={(5,1)}]\car{red}\end{scope}
 \begin{scope}[shift={(3,1)}]\car{blue}\end{scope}
 \begin{scope}[shift={(4,0)}]\car{red}\end{scope}
 \begin{scope}[shift={(0,-1)}]\car{blue}\end{scope}
 \begin{scope}[shift={(3,-1)}]\car{red}\end{scope}
\end{tikzpicture}
\caption{For \(t\in[0,T]\) the leader \textcolor{red}{\(\xl(t)\)} with its dynamics determined by the acceleration \(u_{\text{lead}}(t)\) and the follower \textcolor{blue}{\(x(t)\)} with its dynamics governed by the classical IDM (acceleration \(\Acc\) as in \cref{defi:IDM_acc}). The overall dynamics is stated in \cref{defi:IDM_model}. The \textcolor{blue}{follower} approaches the \textcolor{red}{leader}. Will they collide?}
    \label{fig:car_following}
\end{figure}
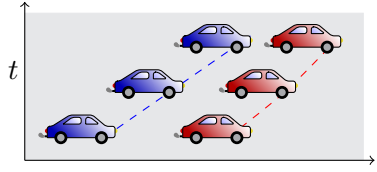
We give a short overview of some of the meanings of the parameters in the IDM.
\begin{remark}[Meaning of the previously introduced parameters] The parameters $a$, $b$, $v_{\text{free}}$, $\tau$, $s_0$, $l$ and $\delta$, introduced in \cref{defi:IDM_acc}, are model parameters which have -- according to  \cite{treiber2000congested} -- the following meaning:
\begin{description}
   \item[acceleration] $a$: the maximum vehicle acceleration;
   \item[comfortable braking deceleration] $b$: a positive number; 
   \item[desired velocity] $v_{\text{free}}$: the velocity the vehicle would drive at in free traffic;
   \item[desired time headway] $\tau$: the minimum possible time to the vehicle in front;
   \item[minimum spacing] $s_0$: a minimum desired net distance;
    \item[the length of the vehicle] $l$;
    \item[the acceleration exponent] $\delta$: Specifying how the acceleration decreases when approaching the desired velocity $v_{\text{free}}$.
\end{description}
\cref{tab:parameters} shows some suggested values for the parameters already identified in \cite{treiber2000congested}.
\begin{table}
\centering
	\caption{According to \cite{treiber2000congested} typical and physical meaningful variables for the IDM}
    	\begin{tabular}{l||r||r }
		\textbf{Parameters} &\textbf{Variable} & \textbf{Suggested value}\\
		\hline\hline
		Maximum acceleration& $a$   & 0.73 $m/s^2$   \\
		Desired deceleration& $b$   &   1.67 $m/s^2$  \\
		Desired velocity& $v_{\text{free}}$ & 120 $km/h$\\
		Desired time headway& $\tau$   & 1.6 $s$ \\
		Minimum spacing & $s_0$ & 2 $m$   \\
		Length of the vehicle& $l$ & 5 $m$ \\
		Acceleration exponent & $\delta$&  4 \\
    \end{tabular}
	\label{tab:parameters}
\end{table}
\end{remark}

\smallskip
For the system to be physically reasonable we require some additional assumptions on the order of the initial position and other parameters for the acceleration functions. This is made precise in the following \cref{ass:input_datum}.
\begin{assumption}[Assumptions on input datum and more]\label{ass:input_datum}
We assume that 
\begin{description}
\item[Leading velocity:]
\(
  \ul\in \mathcal{U}_{\text{lead}}\:\Big\{u\in L^{\infty}((0,T)): \vlz+\int_{0}^{t}u(s)\dd s\geq0\ \forall t\in[0,T]\Big\}.
\)
\item[Input parameters for \(\Acc\):] \((a,b,v_{\text{free}},\tau,s_{0},l,\delta)\in\R^{3}_{>0}\times(0,T)\times\R_{>0}^{2}\times\R_{>1}.\)
\item[Physical relevant initial datum:] \((x_{0},\xlz,v_{0},\vlz)\in\R^{2}\times\R_{\geq 0}^{2}:\ x_{0}<\xlz-l\).
\end{description}
\end{assumption}
The previous assumption on the involved datum enables it to prove the well-posedness on a significantly small time horizon, i.e., that there exists a solution on the time horizon and that this solution is unique:
\begin{theorem}[Well-posedness of sufficiently small time horizon]\label{theo:well_posedness_small_time_horizon}
Given \(N\in\N_{\geq0}\) and \cref{ass:input_datum}, there exists a small enough time \(T^{*}\in\R_{>0}\) so that the IDM in \cref{defi:IDM_model} admits a unique solution \((x,\xl)\in W^{2,\infty}([0,T^{*}])^{2}\).
\end{theorem}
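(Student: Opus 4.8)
The plan is to read the IDM as a Carathéodory initial value problem for the follower — the leader's trajectory being an explicitly computable datum — and to run a standard local existence/uniqueness argument on a short time tube on which the no‑collision constraint \(\xl - x - l > 0\) is preserved. First I would decouple the leader: since \(\ul\) does not depend on the follower, set \(\vl(t)\:\vl_0 + \int_0^t \ul(s)\dd s\) and \(\xl(t)\:\xl_0 + \int_0^t \vl(s)\dd s\); by \cref{ass:input_datum} one has \(\vl \geq 0\), \(\vl \in W^{1,\infty}([0,T])\) and \(\xl \in W^{2,\infty}([0,T])\), with bounds depending only on \(\vl_0\), \(T\) and \(\|\ul\|_{L^\infty}\). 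What remains is the system
\begin{equation*}
\dot x(t) = v(t), \qquad \dot v(t) = \Acc\big(x(t),v(t),\xl(t),\vl(t)\big), \qquad (x(0),v(0)) = (x_0,v_0),
\end{equation*}
whose right‑hand side \(F(t,x,v)\:\big(v,\Acc(x,v,\xl(t),\vl(t))\big)\) is measurable in \(t\) (since \(\ul\) is merely \(L^\infty\)) and smooth in \((x,v)\) on the open set \(\mathcal A\).

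Next I would build the tube. Since \(x_0 < \xl_0 - l\), the gap \(d_0\:\xl_0 - x_0 - l\) is strictly positive, and by continuity of \(\xl\) there is \(T_1 \in (0,T)\) with \(\xl(t) \geq \xl_0 - \tfrac14 d_0\) on \([0,T_1]\). Fix \(R \in (0,\tfrac14 d_0]\) and put \(\mathcal K\: [0,T_1] \times [x_0-R,x_0+R] \times [v_0-R,v_0+R]\); then for \((t,x,v) \in \mathcal K\) we have \(\xl(t) - x - l \geq \tfrac12 d_0 > 0\), so (after substituting the given \(\xl,\vl\)) \(\mathcal K\) lies in \(\mathcal A\), bounded away from the collision boundary. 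Consequently \(F\) is bounded on \(\mathcal K\), say by some \(M \geq 1\), and Lipschitz in \((x,v)\) on \(\mathcal K\) uniformly in \(t\), with some constant \(L\): indeed \(\Acc\) is \(C^\infty\) in all its arguments except for the factor \(|v|^\delta\), which for \(\delta \in \R_{>1}\) is \(C^1\) (its derivative \(\delta|v|^{\delta-1}\sgn(v)\) extends continuously by \(0\) at \(v=0\)), hence locally Lipschitz.

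Then I would apply the Carathéodory existence theorem — equivalently, a Banach fixed‑point argument for \((\Phi(x,v))(t) = (x_0,v_0) + \int_0^t F(s,x(s),v(s))\dd s\) on \(C([0,T^*];\R^2)\) — with \(T^*\:\min\{T_1, R/M\}\), which keeps the iterates inside \(\mathcal K\) and yields an absolutely continuous solution \((x,v)\) on \([0,T^*]\) with values in \(\mathcal K\). Uniqueness on \([0,T^*]\) follows from the Lipschitz bound on \(\mathcal K\) by Grönwall's inequality. Finally, a regularity bootstrap: \(\dot x = v\) is bounded and \(\dot v = \Acc(\cdots) \in L^\infty(0,T^*)\) because the solution stays in \(\mathcal K\), so \(v \in W^{1,\infty}\), \(x \in W^{2,\infty}([0,T^*])\), and together with \(\xl \in W^{2,\infty}\) this gives \((x,\xl) \in W^{2,\infty}([0,T^*])^2\). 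The same construction runs for a platoon of \(N+1\) vehicles, carrying out the tube construction for each consecutive pair simultaneously.

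The main obstacle — though essentially bookkeeping — is that \(\Acc\) is only partially defined: one must ensure the constructed solution never leaves \(\mathcal A\) on \([0,T^*]\), which is exactly what the choices \(R \leq \tfrac14 d_0\) and \(T^* \leq T_1\) buy, and one must verify the Lipschitz property of \(\Acc\) near (but not on) the collision set, where the denominator \(\xl - x - l\) degenerates. The only structural subtlety is the low \(L^\infty\)-regularity of \(\ul\), which forces the Carathéodory framework instead of classical Picard–Lindelöf but changes nothing of substance.
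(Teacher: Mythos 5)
Your proposal is correct and follows essentially the same route as the paper, which simply observes that the right-hand side is locally Lipschitz around the initial datum (the gap $\xl_0 - x_0 - l > 0$ keeping it inside $\mathcal{A}$) and invokes Picard--Lindel\"of; your version just makes the tube construction, the Lipschitz verification of $|v|^{\delta}$ at $v=0$, and the $W^{2,\infty}$ bootstrap explicit. The only substantive refinement is your observation that the $L^{\infty}$ leader input formally calls for the Carath\'eodory framework, though after integrating the decoupled leader the follower's right-hand side is continuous in $t$, so the paper's appeal to the classical theorem is also defensible.
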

\begin{proof}
The right hand side of \cref{defi:IDM_model} is around the initial datum in \cref{ass:input_datum} locally Lipschitz-continuous. The existence and uniqueness on a small time horizon then follows by the Picard-Lindel\"of Theorem (\cite[Chapter 4]{Arnold} or \cite[Thm 1.3]{Coddington}).
\end{proof}

\section{Counterexamples}\label{sec:counterexamples}
Given \cref{theo:well_posedness_small_time_horizon}, the next natural questions consist of whether the solution on the small time horizon can be extended to any finite time horizon and whether the model remains reasonable. As it turns out, neither points hold if we do not restrict our initial datum beyond \cref{ass:input_datum}.
Of course, such unreasonable behavior appears with specific initial datum, however this may in fact happen frequently when different traffic scenarios are analyzed. Especially, such initial datum could happen in the presence of lane-changing, on ramps and other network components as it is the case for most commonly used micro-simulators.
We present the shortcomings in the following.
\subsection{Negative velocity}
In this subsection, we show that the IDM can develop negative velocities for the following vehicle, although the leading vehicle might drive with positive speed. The reason for this is that if the following vehicle is too close to the leading vehicle, it needs to slow down. Assume now that it actually has already zero velocity, it will need to move backwards to make it to the ``safety'' distance \(s_{0}\) it aims for.
\begin{example}[Negative velocity]\label{ex:negative_velocity}
Assume that 
\(x_0=\xlz-l-\eps\) for \(\eps\in\R_{>0}\) yet to be determined and \(v_0=0\). Then, we compute the change of velocity for the following vehicle and have for \(t\in[0,T]\) according to \cref{defi:IDM_model}
\begin{align*}
    \dot{v}(t)=a\bigg(1- \big(\tfrac{|v(t)|}{v_{\text{free}}}\big)^{\delta}-\Big(\tfrac{2\sqrt{ab}(s_{0}+v(t)\tau)+v(t)(v(t)-\vl(t))}{2\sqrt{ab}(\xl(t)-x(t)-l)}\Big)^{2}\bigg).
\end{align*}
Plugging in \(t=0\) leads to
\begin{align*}
    \dot{v}(0)=a\bigg(1 -\Big(\tfrac{2\sqrt{ab}s_{0}}{2\sqrt{ab}\eps}\Big)^{2}\bigg)=a\Big(1-\big(\tfrac{s_{0}}{\eps}\big)^{2}\Big).
\end{align*}
Thus, whenever \(\epsilon<s_{0}\), the following vehicle has -- at least for small time horizon --  a negative speed, although the leading vehicle drives with arbitrary speed \(\vl\in\R\).
This is also detailed in the following \cref{fig:example_negative_speed_1} and in \cref{fig:example_negative_speed_2} it is demonstrated that for larger spacing this does not occur.
\begin{figure}
\centering
          \begin{tikzpicture}[scale=0.6]
          \begin{axis}[
            width=0.5\textwidth,
            xlabel={\(t\)},
            ylabel={\((\mathcolor{blue}{x(t)},\mathcolor{red}{\xl(t)})\)},
                 legend pos={north west},
                 ymin=-1,ymax=9]
          \addplot[color=red,line width=2pt, mark=none]  table [y index=1, x index=0, col sep=comma] {Tikz_data/Example_3_1_11.csv};
          \addplot[color=blue,line width=2pt, mark=none]  table [y index=2, x index=0, col sep=comma] {Tikz_data/Example_3_1_11.csv};
          \addlegendentry{leader};
          \addlegendentry{follower};
          \end{axis}
          \end{tikzpicture}
          \begin{tikzpicture}[scale=0.6]
          \begin{axis}[
            width=0.5\textwidth,
            xlabel={\(t\)},
            ylabel={\((\mathcolor{blue}{v(t)},\mathcolor{red}{\vl(t)})\)},
                 legend pos={north west},
                 ymin=-0.3,ymax=1.1]
          \addplot[color=red,line width=2pt, mark=none]  table [y index=1, x index=0, col sep=comma] {Tikz_data/Example_3_1_21.csv};
          \addplot[color=blue,line width=2pt, mark=none]  table [y index=2, x index=0, col sep=comma] {Tikz_data/Example_3_1_21.csv};
          \addplot[color=black, domain=0:3, line width=2pt,dotted]{0};
          \end{axis}
          \end{tikzpicture}
               \begin{tikzpicture}[scale=0.6]
          \begin{axis}[
            width=0.5\textwidth,
            xlabel={\(t\)},
            ylabel={\(\mathcolor{blue}{a(t)}\)},
                 legend pos={north west},
                 ymin=-1,ymax=0.5]
          \addplot[color=blue,line width=2pt, mark=none]  table [y index=1, x index=0, col sep=comma] {Tikz_data/Example_3_1_31.csv};
          \addplot[color=black, domain=0:3, line width=2pt,dotted]{0};
          \end{axis}
          \end{tikzpicture}
              \caption{The IDM with parameters \(a=1,\ b=2,\ v_{\text{free}}=1,\ \tau=1.6,\ l=4,\ s_{0}=2,\ d=4\) and datum \(x_{0}=0,\ \xlz=l+1.5<l+s_{0},\ v_0=0,\ \vlz=0\). \textbf{Left:} vehicles' positions, \textbf{middle: } vehicles' velocities and \textbf{right: } followers acceleration. The \textcolor{red}{leader} follows the free flow acceleration profile as in \cref{defi:free_flow}.  As the initial distance between the two vehicles is smaller than \(s_{0}\), the \textcolor{blue}{following vehicle} moves backwards to increase the space.}
              \label{fig:example_negative_speed_1}
\end{figure}
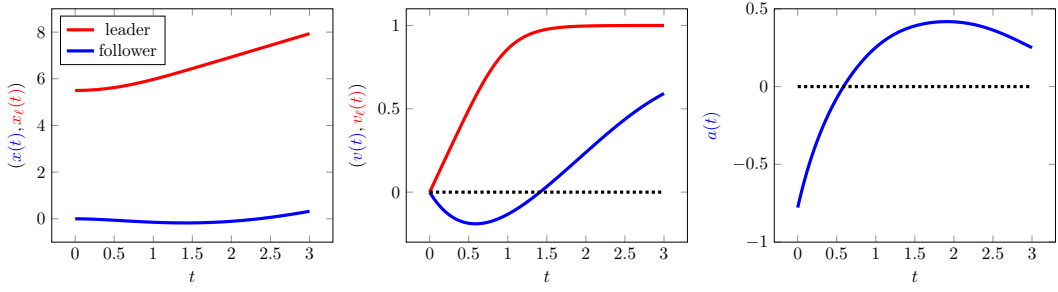

\begin{figure}
\centering
          \begin{tikzpicture}[scale=0.6]
          \begin{axis}[
            width=0.5\textwidth,
            xlabel={\(t\)},
            ylabel={\((\mathcolor{blue}{x(t)},\mathcolor{red}{\xl(t)})\)},
                 legend pos={north west},
                 ymin=-1,ymax=9]
          \addplot[color=red,line width=2pt, mark=none]  table [y index=1, x index=0, col sep=comma] {Tikz_data/Example_3_1_12.csv};
          \addplot[color=blue,line width=2pt, mark=none]  table [y index=2, x index=0, col sep=comma] {Tikz_data/Example_3_1_12.csv};
          \addlegendentry{leader};
          \addlegendentry{follower};
          \end{axis}
          \end{tikzpicture}
          \begin{tikzpicture}[scale=0.6]
          \begin{axis}[
            width=0.5\textwidth,
            xlabel={\(t\)},
            ylabel={\((\mathcolor{blue}{v(t)},\mathcolor{red}{\vl(t)})\)},
                 legend pos={north west} ,
                 ymin=-0.3,ymax=1.1]
          \addplot[color=red,line width=2pt, mark=none]  table [y index=1, x index=0, col sep=comma] {Tikz_data/Example_3_1_22.csv};
          \addplot[color=blue,line width=2pt, mark=none]  table [y index=2, x index=0, col sep=comma] {Tikz_data/Example_3_1_22.csv};
          \addplot[color=black, domain=0:3,, line width=2pt,dotted]{0};
          \end{axis}
          \end{tikzpicture}
                         \begin{tikzpicture}[scale=0.6]
          \begin{axis}[
            width=0.5\textwidth,
            xlabel={\(t\)},
            ylabel={\(\mathcolor{blue}{a(t)}\)},
                 legend pos={north west},
                 ymin=-1,ymax=0.5]
          \addplot[color=blue,line width=2pt, mark=none]  table [y index=1, x index=0, col sep=comma] {Tikz_data/Example_3_1_32.csv};
                  \addplot[color=black, domain=0:3,, line width=2pt,dotted]{0};
          \end{axis}
          \end{tikzpicture}
        \caption{Continuation of \cref{fig:example_negative_speed_1}: The IDM with parameters \(a=1,\ b=2,\ v_{\text{free}}=1,\ \tau=1.6,\ l=4,\ s_{0}=2,\ d=4\) and datum \(x_{0}=0,\ \xlz=l+2=l+s_{0},\ v_0=0,\ \vlz=0 \). The \textcolor{red}{leader} follows the free flow acceleration as in \cref{defi:free_flow}. The initial distance between the two vehicles is large enough so that the \textcolor{blue}{following vehicle} does not attain negative velocity.}
    \label{fig:example_negative_speed_2}
\end{figure}
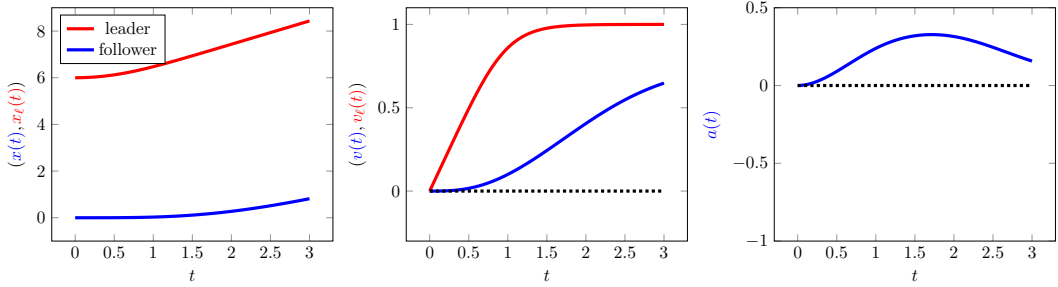

\end{example}

A more reasonable approach for avoiding this type of behavior is that the following car just waits until the leading car has moved farther away. This can be achieved by adjusting the model accordingly as done in \Cref{sec:IDM_improvements}.
\subsection{Velocity exploding in finite time}
In this subsection, we show that the solution can cease to exist in finite time. We first present an example, with fixed parameters, to explain the reasons behind this phenomenon. Then, we generalize the example and illustrate how this phenomenon may occur for parameters in a whole region of the space.
\begin{example}[Negative velocity and a blow-up of the solution in finite time]\label{ex:negative_velocity_blowup}
Assume the constants and initial data are as in \cref{ex:negative_velocity}, with
\(0<\epsilon<1\) and non-negative initial velocity of the leading vehicle $\vlz \geq 0$.
The leading vehicle's position is given by
\begin{align*}
    \xl(t)= \xlz+\vlz t+\tfrac{u}{2}t^{2},
\end{align*}
and the leading vehicle's velocity by
\begin{align*}
    \vl(t) = \vlz + u t.
\end{align*}
Plugging this into the equations for the follower, we obtain the following system of ODEs
\begin{align*}
    \dot{x}(t)&=v(t)\\
    \dot{v}(t)&=a\bigg(1- \Big(\tfrac{|v(t)|}{v_{\text{free}}}\Big)^{\delta}-\Big(\tfrac{2\sqrt{ab}(s_{0}+v(t)\tau)+v(t)(v(t)-\vl(t))}{2\sqrt{ab}\cdot( \xl(t)-\xlz+\epsilon-\int_{0}^{t}v(s)\dd s)}\Big)^{2}\bigg)\\
    x(0)&= \xlz-l-\epsilon\\
    v(0)&=0.    
\end{align*}
Now we fix \(v_{\text{free}}=1=a\), \(s_{0}=16\), \(b=\tfrac{1}{4a}\), \(\tau=8\), \(\vlz=0\), \(u=0\), \(\delta=4\) and \(\xlz=0\), which gives 
\begin{align*}
    \dot{x}(t)&=v(t)\\
    \dot{v}(t)&=1- \big(v(t)\big)^{4}-\Big(\tfrac{(4+v(t))^{2}}{\epsilon-\int_{0}^{t}v(s)\dd s}\Big)^{2}.
\end{align*}
We now show that there exists $\bar{t}$ such that $v(\bar{t})< -1$.
Assume, by contradiction, that $v(t)\geq -1$ on $t\in[0,1]$, 
then since $\epsilon<1$ for $t\in[0,1]$ we get
\begin{equation}
\dot{v}(t) \leq 
1-\Big(\tfrac{(4+v(t))^{2}}{\epsilon-\int_{0}^{t}v(s)\dd s}\Big)^{2}\leq 1-\big(\tfrac{4}{\epsilon+1}\big)^2\leq
1 - \big(2\big)^2 = - 3,
\end{equation}
and by integration and by the previous assumption we have
\[
-1\leq v_{0}+\int_{0}^{t}\dot{v}(s)\dd s\leq v_{0}-3t
\]
and therefore -- plugging in for instance \(t=1\)
\[
-1\leq \int_{0}^{1}\dot{v}(s)\dd s\leq -3,
\]
a contradiction.
%

\paragraph{Blow-up of the solution in finite time}\label{para:blow_up}
Let \(t^{*}\in(0,T]\) denote the first time such that \(v(t^{*})<-1\).
Then, recalling the semi-group property of ODEs, we can consider the initial value problem in \(\tilde{v}:[t^{*},T]\rightarrow\R\)
\begin{align*}
    \tilde{v}(t^{*})&=v(t^{*})<-1\\
    \dot{\tilde{v}}(t)    &=1- \big(\tilde{v}(t)\big)^{4}-\Big(\tfrac{(4+\tilde{v}(t))^{2}}{\epsilon-\int_{0}^{t}\tilde{v}(s)\dd s}\Big)^{2} && t\in[t^{*},T].
\end{align*}
Estimating \(\dot{\tilde{v}},\) we have
\begin{align*}
    \dot{\tilde{v}}(t)\leq 1-\big(\tilde{v}(t)\big)^{4}\quad \forall t\in[t^{*},T].
\end{align*}
As the initial value \(\tilde{v}(t^{*})<-1\) we obtain that \(\tilde{v}\) is monotonically decreasing 
and we can thus estimate
\begin{align*}
    \dot{\tilde{v}}(t)\leq 1-\big(\tilde{v}(t)\big)^{2}\quad\forall t\in[t^{*},T]
\end{align*}
which can be solved explicitly to obtain
\begin{align*}
    \tilde{v}(t)\leq \tfrac{v(t^{*})\exp(2t)+v(t^{*})+\exp(2t)-1}{\exp(2t)+1+v(t^{*})(\exp(2t)-1)}.
\end{align*}
However, the right hand side goes to \(-\infty\) when \(t\rightarrow \tfrac{1}{2}\ln\Big(\tfrac{-1+v(t^{*})}{1+v(t^{*})}\Big)\) and thus, also the solution for \(v\) ceases to exist for \(t\geq t^{
*}+\tfrac{1}{2}\ln\Big(\tfrac{-1-v(t^{*})}{1+v(t^{*})}\Big)\).
This is illustrated in \cref{fig:speed_minus_infty}.
\begin{figure}
\centering
          \begin{tikzpicture}[scale=0.6]
          \begin{axis}[
            width=0.5\textwidth,
            xlabel={\(t\)},
            ylabel={\((\mathcolor{blue}{x(t)},\mathcolor{red}{\xl(t)})\)},
                 legend pos= {north west},
                 ymin=-1.6,ymax=9]
          \addplot[color=red,line width=2pt, mark=none]  table [y index=1, x index=0, col sep=comma] {Tikz_data/Example_3_3_1.csv};
          \addplot[color=blue,line width=2pt, mark=none]  table [y index=2, x index=0, col sep=comma] {Tikz_data/Example_3_3_1.csv};
            \addlegendentry{leader};
          \addlegendentry{follower};
          \end{axis}
          \end{tikzpicture}
          \begin{tikzpicture}[scale=0.6]
          \begin{axis}[
            width=0.5\textwidth,
            xlabel={\(t\)},
            ylabel={\((\mathcolor{blue}{v(t)},\mathcolor{red}{\vl(t)})\)},
                 legend pos={north west} ,
                 ymin=-5,ymax=1.1]
          \addplot[color=red,line width=2pt, mark=none]  table [y index=1, x index=0, col sep=comma] {Tikz_data/Example_3_3_2.csv};
          \addplot[color=blue,line width=2pt, mark=none]  table [y index=2, x index=0, col sep=comma] {Tikz_data/Example_3_3_2.csv};
          \addplot[color=black, domain=0:3, line width=2pt,dotted]{0};
                  \end{axis}
          \end{tikzpicture}
        \begin{tikzpicture}[scale=0.6]
          \begin{axis}[
            width=0.5\textwidth,
            xlabel={\(t\)},
            ylabel={\(\mathcolor{blue}{a(t)}\)},
                 legend pos={north west},
                 ymin=-20,ymax=3]
          \addplot[color=blue,line width=2pt, mark=none]  table [y index=2, x index=0, col sep=comma] {Tikz_data/Example_3_3_3.csv};
        \addplot[color=black, domain=0:3, line width=2pt,dotted]{0};
          \end{axis}
          \end{tikzpicture}
    \caption{Same parameters as in \cref{fig:example_negative_speed_1}, but \(\xlz=l+1<l+s_{0}\). \textbf{Left:} Positions of the vehicles;
    \textbf{Middle:} Velocities of the vehicles, \textbf{Right:} Acceleration of the follower. Velocity of the \textcolor{blue}{follower} diverging to \(-\infty\) around \(t\approx 1.06\). The solution ceases to exist for larger \(t\).}
    \label{fig:speed_minus_infty}
    \end{figure}
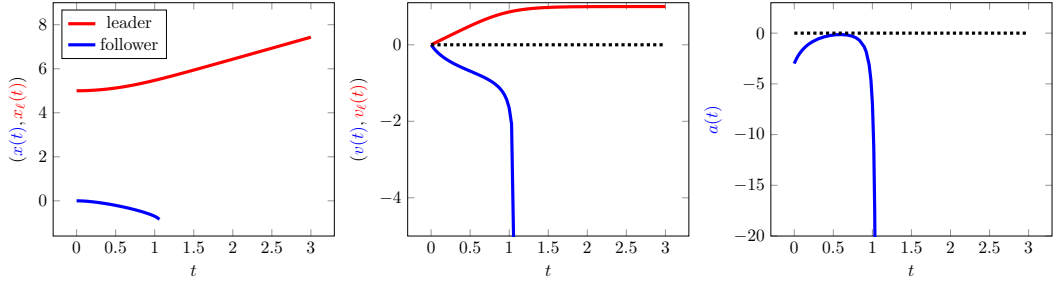
\end{example}

In this example, we chose very specific parameters for convenience. However, the same behavior and divergence of the speed to $-\infty$ in finite time can be shown for a rather general range of parameters. This is presented in \cref{ex:negative_velocity_blow_down}. To prove this, we start by showing the following Lemma.
\begin{lemma}[Sufficiently negative velocity in small time]
\label{lem_finite_blow_up}
Assume that the initial velocity of the following vehicle $v_{0}=0$, the initial velocity of the leading vehicle $\vlz>0$, and the initial positions of the two vehicles are separated by $l+\varepsilon$ for some $\varepsilon\in (0, s_0)$, i.e., $\xlz-x_{0}-l =\varepsilon<s_0$. Choose parameters $s_0$, $\tau$ and $v_{\text{free}}$ such that $-\frac{s_0}{1.01 \tau} < -v_{\text{free}}$, let $\delta>1$ 
and let $v_{\max}>0$ be an upper bound for the vehicles' velocities. Then there exists $t^{**} \in [0, T]$, such that $v(t^{**}) < -v_{\text{free}}$.
\end{lemma}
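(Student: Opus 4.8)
The plan is to argue by contradiction: assume $v(t)\ge -v_{\text{free}}$ for all $t\in[0,T]$ and show that, on a sufficiently short time window, the follower's velocity is nonetheless driven strictly below $-v_{\text{free}}$. First I would record the behaviour near $t=0$: since $v_{0}=0$ and $\xl_{0}-x_{0}-l=\varepsilon<s_{0}$, evaluating the acceleration of \cref{defi:IDM_acc} at $t=0$ gives $\dot v(0)=a\bigl(1-(s_{0}/\varepsilon)^{2}\bigr)<0$, so $v$ becomes negative immediately. Write $d(t)\:\xl(t)-x(t)-l>0$ for the net headway. On any interval on which $v(t)\in[-v_{\text{free}},0]$ the braking term in \cref{defi:IDM_acc} simplifies usefully: the factor $\bigl(|v|/v_{\text{free}}\bigr)^{\delta}$ is nonnegative, and the cross term $v(v-\vl)$ is nonnegative because $v\le 0\le \vl$ (recall \cref{ass:input_datum} forces $\vl\ge 0$). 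Hence on such an interval
\[
\dot v(t)\le a\Bigl(1-\Bigl(\tfrac{s_{0}+v(t)\tau}{d(t)}\Bigr)^{2}\Bigr),
\]
and the hypothesis $v_{\text{free}}<\tfrac{s_{0}}{1.01\tau}$ is precisely what guarantees $s_{0}+v(t)\tau\ge s_{0}-v_{\text{free}}\tau>0$; it even keeps $s_{0}+v\tau>0$ for $v$ slightly below $-v_{\text{free}}$, which is the slack one uses to promote ``$v$ reaches $-v_{\text{free}}$'' to the strict inequality of the statement.

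Next I would control the denominator. While $v\le 0$ one has $\dot d=\vl-v\ge 0$, and using $\vl\le v_{\max}$ together with $v\ge -v_{\text{free}}$ also $\dot d\le v_{\max}+v_{\text{free}}$, so $d(t)\le\varepsilon+t\,(v_{\max}+v_{\text{free}})$; moreover $v$ cannot return to $0$ before $d$ reaches $s_{0}$, since at any such instant $\dot v=a\bigl(1-(s_{0}/d)^{2}\bigr)$ would have to be $\ge 0$. Choosing the window $[0,t_{1}]$ short enough that $\varepsilon+t_{1}(v_{\max}+v_{\text{free}})$ remains a controlled fraction of $s_{0}-v_{\text{free}}\tau$, the displayed comparison estimate yields $\dot v(t)\le -c$ on $[0,t_{1}]$ for an explicit $c=c(a,b,s_{0},\tau,v_{\text{free}},v_{\max},\varepsilon)>0$; integrating gives $v(t_{1})\le -c\,t_{1}$, and it then remains to verify, by optimising $t_{1}$, that the parameters force $c\,t_{1}>v_{\text{free}}$ — contradicting the standing assumption and establishing the claim with $t^{**}=t_{1}$.

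The step I expect to be the main obstacle is exactly this coupling between velocity and headway: as $v$ turns negative the headway $d$ grows, which weakens the braking term, so one must show that $v$ plunges below $-v_{\text{free}}$ \emph{before} $d$ has grown enough to switch off the strong deceleration. The quantitative budget that makes this work is the combination of $\varepsilon<s_{0}$ (strong initial deceleration, of order $a((s_{0}/\varepsilon)^{2}-1)$) and the slack $v_{\text{free}}<s_{0}/(1.01\tau)$ (the braking numerator stays bounded away from $0$ on $[-v_{\text{free}},0]$ and a little beyond), while $v_{\max}$ enters only through the linear growth rate of $d$. Making the final inequality $c\,t_{1}>v_{\text{free}}$ close — rather than merely $\dot v<0$ — is where the careful bookkeeping of constants is required.
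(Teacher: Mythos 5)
Your strategy is essentially the paper's: drop the free-flow and cross terms (both correctly justified), bound the braking numerator below by a positive constant using $v_{\text{free}}<s_0/(1.01\tau)$, bound the headway above by $\varepsilon+(v_{\max}+v_{\text{free}})t$, and integrate the resulting differential inequality to push $v$ below $-v_{\text{free}}$. All of your preparatory estimates are correct and correspond to inequalities \eqref{exm2_0}--\eqref{exm2_5} of the paper; your lower bound $s_0+v\tau\ge s_0-v_{\text{free}}\tau$ is even slightly sharper than the paper's $s_0/101$, which it obtains after first disposing of the case $v\le -s_0/(1.01\tau)$ separately.

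The gap is the final step, which you defer to ``optimising $t_1$'': the inequality $c\,t_1>v_{\text{free}}$ is exactly the content of the lemma, and it does not follow from the listed hypotheses for an arbitrary $\varepsilon\in(0,s_0)$. With your constants one has $c=a\bigl((s_0-v_{\text{free}}\tau)^2/(\varepsilon+(v_{\max}+v_{\text{free}})t_1)^2-1\bigr)$, and $\sup_{t_1>0}c\,t_1$ is of order $a(s_0-v_{\text{free}}\tau)^2/\bigl(\varepsilon(v_{\max}+v_{\text{free}})\bigr)$; this exceeds $v_{\text{free}}$ only when $\varepsilon$ is small, and for $\varepsilon$ close to $s_0$ no choice of $t_1$ works at all --- indeed $c\le 0$ as soon as $\varepsilon\ge s_0-v_{\text{free}}\tau$, and the initial deceleration $a\bigl((s_0/\varepsilon)^2-1\bigr)$ is then already tiny. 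The paper closes the argument precisely by exhibiting this $\varepsilon$-dependence: it integrates the time-dependent bound exactly, evaluates at $t=\varepsilon$, obtains $v(\varepsilon)\le a\varepsilon\bigl(1-\tfrac{s_0^2}{101^2\varepsilon^2(1+v_{\max}+v_{\text{free}})}\bigr)\to-\infty$ as $\varepsilon\to0$, and concludes only ``for $\varepsilon>0$ small enough.'' So the missing idea is that smallness of $\varepsilon$ --- not merely $\varepsilon<s_0$ --- is the mechanism that wins the race between the plunging velocity and the growing headway; without making that restriction explicit and quantitative, your concluding inequality is unproven, and as asserted for a general $\varepsilon\in(0,s_0)$ it is false.
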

\begin{proof}
We consider two different cases: 
\begin{itemize}
\item If there exists $t^{*} \in [0, T]$, such that $v(t^{*}) <
-v_{\text{free}}$, then we chose $t^{**} = t^{*}$.
\item Otherwise, we assume that for every $t \in [0, T]$, $v(t) \geq -v_{\text{free}}$ which implies that $v(t)> -\frac{s_0}{1.01 \tau}$, i.e., 
\begin{equation}
\label{exm2_0}
s_0 + v(t) \tau >\tfrac{s_0}{101}>0.
\end{equation}
\end{itemize}
By the definition of $v_{\max}$ and $v_{\text{free}}$, we have for every $t \in [0, T]$, $\vl(t)-v(t) \leq v_{\max} + v_{\text{free}}$. Furthermore, since $\vl(t)>0$ and $v(t)<0$, we have $v(t)(v(t)-\vl(t))>0$. Then one can find an upper bound for the distance between the two vehicles.  That is, for every  $t \in [0,T]$, \begin{equation}
\label{upper_bounded_distance}
\xl(t)-x(t) -l \leq \varepsilon+(v_{\text{max}}+v_{\text{free}}) t.
\end{equation}
Therefore, for every $t \in [0, T]$, \begin{align}
\label{exm2_1}
\dot{v}(t) &= a\bigg(1-\big(\tfrac{|v(t)|}{v_{\text{free}}}\big)^{\delta}-\Big(\tfrac{2\sqrt{ab}(s_{0}+v(t)\tau)+v(t)(v(t)-\vl(t))}{2\sqrt{ab}(\xl(t)-x(t)-l)}\Big)^{2}\bigg)\\ \label{exm2_2}
    &\leq  a\bigg(1-\Big(\tfrac{2\sqrt{ab}(s_{0}+v(t)\tau)+v(t)(v(t)-\vl(t))}{2\sqrt{ab}(\xl(t)-x(t)-l)}\Big)^{2}\bigg)\\ \label{exm2_3}
    &\leq a\bigg(1-\Big(\tfrac{s_{0}+v(t)\tau}{\xl(t)-x(t)-l}\Big)^{2}\bigg)\\
    \label{exm2_4}
   &\leq a\bigg(1-\Big(\tfrac{s_{0}+v(t)\tau}{\varepsilon+(v_{\max}+v_{\text{free}})t}\Big)^{2}\bigg)\\ \label{exm2_5}
    &\leq a\bigg(1-\Big(\tfrac{\frac{s_0}{101}}{\varepsilon+(v_{\max}+v_{\text{free}})t}\Big)^{2}\bigg).\notag
\end{align}
Note that inequality \eqref{exm2_2} is true since $\frac{|v(t)|}{v_{\text{free}}}$ is non-negative, inequality  \eqref{exm2_3} is due to the fact that for every $t \in [0, T]$, $v(t)(v(t)-\vl(t)) >0$,
inequality \eqref{exm2_4} is because of the upper bounded for the distance between the two vehicles given by inequality \eqref{upper_bounded_distance}, and inequality \eqref{exm2_5} is due to inequality \eqref{exm2_0}.

Hence, for every $t \in [0,T]$, 
\begin{align*}
    v(t) &\leq a\int_0^{t} \bigg(1-\Big(\tfrac{\frac{s_0}{101}}{\varepsilon+(v_{\max}+v_{\text{free}})s}\Big)^{2}\bigg) \dd s \\
    &= a \bigg( t + \tfrac{s_0^2}{101^2 (v_{\max}+v_{\text{free}})}\left( \tfrac{1}{\varepsilon+(v_{\max}+v_{\text{free}})t} -\tfrac{1}{\varepsilon}\right)   \bigg)
    = a t\bigg(1-\tfrac{s_0^2}{101^2}\tfrac{1}{\varepsilon(\varepsilon+(v_{\max}+v_{\text{free}})t)}    \bigg).
\end{align*}
Setting $t = \varepsilon$, we have, 
\begin{align*}
      v( \varepsilon) &\leq  a  \varepsilon\bigg(1-\tfrac{s_0^2}{101^2}\tfrac{1}{\varepsilon(\varepsilon+(v_{\max}+v_{\text{free}}) \varepsilon)}    \bigg)= a\tfrac{101^2\varepsilon(\varepsilon+(v_{\text{max}}+v_{\text{free}})\varepsilon)-s_0^2}{101^2(\varepsilon+(v_{\text{max}}+v_{\text{free}})\varepsilon)}\rightarrow -\infty \text { as } \varepsilon \to 0. 
\end{align*}
Note that the upper bound for the velocity at time $t=\varepsilon$ goes to negative infinity as $\varepsilon$ goes to zero. 
Therefore, for $\varepsilon>0$ small enough, we have,
\begin{align*}
    v(\varepsilon) \leq -v_{\text{free}}.
\end{align*}
Hence, there exists $t^{**}=\varepsilon \in [0, T]$, such that $v(t^{**}) < -v_{\text{free}}$. 
\end{proof}
The previous Lemma guarantees that for properly chosen initial datum and velocity the follower's velocity can become more negative than \(-v_{0}\), the negative free-flow velocity. This enables us to prove that the solution ceases to exist in finite time. This result is related to the famous example of a blow-up of solutions to the Riccati ODEs for specific initial datum in finite time:
\begin{example}[Negative velocity and a blow-``down'' of the velocity in finite time]\label{ex:negative_velocity_blow_down} Again, we assume the parameters and initial data as in \cref{lem_finite_blow_up}. Then by \cref{lem_finite_blow_up}, there exists $t^{**} \in [0, T]$, such that $v(t^{**}) < -v_{\text{free}}$.
Recall that $\delta>1$. For any time $t$ such that $v(t) < -v_{\text{free}}$, 
\begin{align*}
   \dot{v}(t) \leq a\Big(1- \big(\tfrac{|v(t)|}{v_{\text{free}}}\big)^{\delta}\Big) <0
\end{align*}
Thus $v$ is strictly decreasing on the time interval $[t^{**}, T]$.  
In addition, we are going to show that $v$ ceases to exist in finite time and that there exists $t_{1}>0$ such that $\lim\limits_{t\rightarrow t_{1}} v(t) =-\infty$. Assume by contradiction that this is not the case. Then as $v(T)<-v_{0}$ the solution remains strictly decreasing as long as it exists and as it does not reach $-\infty$ in finite time it can be extended on $[0,+\infty)$ and is strictly decreasing on $[T,+\infty)$. Therefore we have the following
\begin{equation}
\int_{T}^{t}\tfrac{\dot v(s)}{1-\left(\frac{|v(s)|}{v_{\text{free}}}\right)^{\delta}}\dd s\geq a (t-T),\ \forall t\in [T,+\infty).\end{equation}
As $v$ is strictly decreasing we can perform a change of variable in the integral by setting $y = -v(s)$ to get
\begin{equation}\label{eq-interm}
\int_{-v(T)}^{-v(t)}\tfrac{1}{\left(\frac{y}{v_{\text{free}}}\right)^{\delta}-1}\dd y\geq a (t-T),\ \forall t\in [T,\infty).
\end{equation}
Note that $-v(T)>v_{0}$. We denote $\eta \coloneqq -v(T)-v_{\text{free}}>0$, \cref{eq-interm} implies
\begin{equation}
    \int_{v_{\text{free}}+\eta}^{+\infty}\tfrac{1}{\left(\frac{y}{v_{\text{free}}}\right)^{\delta}-1}\dd y\geq a(t-T),\ \forall t\in [T,\infty).
\end{equation}
Letting $t\rightarrow+\infty$ this implies that 
\begin{equation}
 \int_{v_{\text{free}}+\eta}^{+\infty}\tfrac{1}{\left(\frac{y}{v_{\text{free}}}\right)^{\delta}-1}\dd y = \infty,
\end{equation}
but because $\delta >1$ we have  \(\int_{v_{\text{free}}+\eta}^{+\infty}\tfrac{1}{\left(\frac{y}{v_{\text{free}}}\right)^{\delta}-1}\dd y\in \R\), which gives a contradiction. Therefore, $v$ ceases to exist and diverges to $-\infty$ in finite time.
\end{example}
As we have seen from the previous \cref{ex:negative_velocity_blow_down} the velocity can blow-up in finite time. However, what is not clear is whether the position of the car can consequently also explode. Thanks to the relation between position and velocity, i.e., \(x'(t)=v(t),\ t\in[0,T]\) this is a matter of whether \(v\in L^{1}((0,t^{*}))\) if \(t^{*}\) is the time where the velocity goes to \(-\infty\). And indeed, it can be shown that this holds true and the position remains bounded:
\begin{corollary}[Boundedness of the position in the case of a blow-up of velocity]
Let \cref{ass:input_datum}, \(\delta\in\R_{>2}\) and assume that -- as investigated in \cref{ex:negative_velocity_blow_down} -- there exists a time horizon \(t^{*}\in\R_{>0}\) so that
\[
\lim_{t\nearrow t^{*}} v(t)=\lim_{t\nearrow t^{*}} \dot{x}(t)=-\infty.
\]
Then, the position at the time of the blow-up remains finite, i.e.\ 
\[
\exists c\in\R: \lim_{t\nearrow t^{*}}x(t)=c
\]
or equivalently stated
\[
v\in L^{1}((0,t^{*})).
\]
\end{corollary}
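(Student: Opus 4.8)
The plan is to turn the hypothesis $v(t)\to-\infty$ into a quantitative decay rate for $v$ near the blow-up time $t^*$, and then to observe that this rate is Lebesgue-integrable on $(0,t^*)$ exactly when $\delta>2$.

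\textbf{Step 1: a one-sided differential inequality.} Wherever the solution is defined the braking (squared) term in $\Acc$ is non-negative, so $\dot v(t)\le a\big(1-(|v(t)|/v_{\text{free}})^{\delta}\big)$; note this uses only that $x_{\text{l}}(t)-x(t)-l>0$, which persists here because a vehicle with $v<0$ drifts away from its leader. Since $v$ is continuous and $v(t)\to-\infty$, there is $t_0\in(0,t^*)$ with $v<0$ and $|v|\ge w_0\coloneqq 2^{1/\delta}v_{\text{free}}$ on $[t_0,t^*)$ (one may take $t_0=\sup\{t<t^*:|v(t)|\le w_0\}$). Writing $w\coloneqq-v$ on this interval, the inequality becomes $\dot w(t)\ge a\big((w(t)/v_{\text{free}})^{\delta}-1\big)\ge \tfrac{a}{2v_{\text{free}}^{\delta}}\,w(t)^{\delta}=:c\,w(t)^{\delta}$, with $c>0$; in particular $\dot w>0$, so $w$ is strictly increasing and $w(t)\to+\infty$ as $t\nearrow t^*$.

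\textbf{Step 2: the decay rate.} Dividing $\dot w\ge c\,w^{\delta}$ by $w^{\delta}$ and integrating over $[t,s]\subset(t_0,t^*)$ gives $w(t)^{1-\delta}-w(s)^{1-\delta}\ge c(\delta-1)(s-t)$; letting $s\nearrow t^*$ and using $w(s)^{1-\delta}\to0$ (as $\delta>1$) yields
\[
|v(t)|=w(t)\le \big(c(\delta-1)\big)^{-1/(\delta-1)}\,(t^*-t)^{-1/(\delta-1)},\qquad t\in(t_0,t^*).
\]

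\textbf{Step 3: integrability and conclusion.} Since $\delta>2$ we have $1/(\delta-1)<1$, so $(t^*-t)^{-1/(\delta-1)}$ is integrable on $(t_0,t^*)$; hence $v\in L^1((t_0,t^*))$, and as $v$ is continuous (thus bounded) on $[0,t_0]$ we conclude $v\in L^1((0,t^*))$. Finally, from $x(t)=x_0+\int_0^t v(s)\dd s$ and $\int_t^{t^*}|v(s)|\dd s\to0$ as $t\nearrow t^*$ (absolute continuity of the Lebesgue integral), the limit $c\coloneqq\lim_{t\nearrow t^*}x(t)=x_0+\int_0^{t^*}v(s)\dd s$ exists in $\R$, which is the claim.

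\textbf{Expected main obstacle.} The substantive part is Steps 1--2: extracting an \emph{upper} bound on $|v|$ from dynamics that are only known to blow down. The IDM acceleration cooperates solely through the crude estimate $\dot v\le a(1-(|v|/v_{\text{free}})^{\delta})$, so one must be slightly careful that $\Acc$ stays defined up to $t^*$ and that the threshold $w_0$ is eventually reached and not left. Once the scalar comparison $\dot w\ge c\,w^{\delta}$ is in hand, the remainder is the classical fact that such solutions blow up no faster than $(t^*-t)^{-1/(\delta-1)}$, and the integrability of that profile is equivalent to the hypothesis $\delta>2$.
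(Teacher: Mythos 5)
Your proof is correct and follows essentially the same route as the paper's: both reduce to the one-sided comparison $-\dot v\ge \tfrac{a}{2v_{\text{free}}^{\delta}}\,|v|^{\delta}$ once $|v|$ exceeds a fixed threshold near $t^*$, and then integrate a negative power of $|v|$, with the hypothesis $\delta>2$ entering at exactly the same place. The only difference is cosmetic: the paper divides by $|v|^{\delta-1}$ so that a single integration bounds $\|v\|_{L^1(t_1,t^*)}$ directly, whereas you divide by $|v|^{\delta}$ to first extract the pointwise rate $|v(t)|\lesssim (t^*-t)^{-1/(\delta-1)}$ and then integrate it --- a slightly longer path whose by-product is the explicit blow-up rate.
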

\begin{proof}
The proof consists of showing that the \(L^{1}\) mass of the velocity remains bounded. To this end, we estimate the acceleration from 
above. 
Choose \(t_{1}\in [0,t^{*}]\) so that
\begin{align*}
\left(v(s)^{2}-v(s)\vl(s)+2\sqrt{ab} \big(s_{0}+v(s)\tau\big)\right)^{2}> 0 \wedge  v(s)\leq -\max\big\{ 2 v_{\text{free}},1\big\}\qquad \forall s\in[t_{1},t^{*}).
\end{align*}
Such a \(t_{1}\) always exists as \(v\) diverges to \(-\infty\) so that for \(s\) close enough to \(t^{*}\) the quadratic term in the previous estimate will always outnumber the affine linear term and \(\vl\) is essentially bounded. 
Then, recalling \cref{defi:IDM_acc} of the IDM we have for \(s\in [t_{1},t^{*})\)
\begin{equation*}
    \dot v(t) \leq -\tfrac{a}{2}\left(\tfrac{|v(t)|}{v_{\text{free}}}\right)^{\delta}.
\end{equation*}
Assuming \(\delta\in\R_{>2}\) divide by $|v(t)|^{\delta-1}$ (this is possible because $v$ is strictly decreasing and $v(t_{1})<-1$), and integrating between $t_{1}$ and $t\in[t_{1},t^{*})$, one has
\begin{equation}
    \begin{split}
            \int_{t_{1}}^{t} \tfrac{\dot v(\tau)}{|v(\tau)|^{\delta-1}}d\tau &\leq -\tfrac{a}{2v_{\text{free}}^{\delta}}\int_{t_{1}}^{t}|v(\tau)|\dd\tau,\\
            \tfrac{(-v)^{2-\delta}(t)}{\delta-2}-\tfrac{(-v)^{2-\delta}(t_{1})}{\delta-2} &\leq -\tfrac{a}{2v_{\text{free}}^{\delta}}\int_{t_{1}}^{t}|v(\tau)|\dd\tau.
    \end{split}
\end{equation}
Dividing by $-a/2v_{\text{free}}^{\delta}<0$ and letting $t\rightarrow t^{*}$, this gives
\begin{equation}
    \begin{split}
            \left(\tfrac{a}{2v_{\text{free}}^{\delta}}\right)^{-1}\tfrac{(-v)^{2-\delta}(t_{1})}{\delta-2} &\geq \|v\|_{L^{1}((t_{1},t^{*}))}.
    \end{split}
\end{equation}
Hence, $\|v\|_{L^{1}((t_{1},t^{*}))}<+\infty$.

\end{proof}
The previous estimate is particularly interesting as it illustrates that the model behaves still ``somehwat'' reasonable (even in the case of a diverge of the velocity to \(-\infty\)) and underlines the fact that a change in the acceleration to prevent the velocity to diverge might be enough to ``improve'' the model (compare \Cref{sec:IDM_improvements}).




\section{Lower bounds on the distance in specific cases}\label{sec:well_pos_spec_dat_times}
In this section we state results guaranteeing the minimal distance between leader and follower for the IDM.
\begin{theorem}[Minimal ``safety distance'' for the IDM]\label{theo:classical_IDM_safety_distance}
Let \cref{ass:input_datum} hold and particularly \(\vl\geqq 0\).
Assume that for an arbitrary time \(T\in\R_{>0}\) the solution to the IDM exists. Define the relative velocity of the leader and follower at time \(t \in [0, T]\) by \(\vl(t)-v(t)\).
Then, the IDM as in \cref{defi:IDM_model} satisfies the following lower bound on the distance
\begin{description}
\item[if the initial relative velocity is positive, i.e., \(\vlz-v_{0}\geq 0\)]
\begin{align}
    \xl(t)-x(t)-l\geq \min\bigg\{\xlz-x_{0}-l,\sqrt{\tfrac{as_{0}^{2}}{-B}}\bigg\} >0\quad \forall t\in[0,T]\label{eq:lower_bound_headway}
\end{align}
\item[if the initial relative velocity is negative, i.e., \(\vlz-v_{0}< 0\)]
\begin{align}
    \xl(t)-x(t)-l\geq \min\bigg\{\tfrac{-A+\sqrt{A^{2}+4aBs_{0}^{2}}}{2B},\sqrt{\tfrac{as_{0}^{2}}{-B}}\bigg\} >0\quad \forall t\in[0,T]\label{eq:lower_bound_headway_2}
\end{align}
\end{description}
with the constants \(A,B\) given as
\begin{equation}
    \begin{split}
    A&\:-B\cdot(\xlz-x_{0}-l)+a\tfrac{s_{0}^{2}}{\xlz-x_{0}-l}+\tfrac{1}{2}\big(\vlz-v_{0})^{2}>0\\
    B&\: \essinf_{s\in[0,T]}\ul(s)-a.
    \end{split}
    \label{defi:A_B}
\end{equation}
\end{theorem}
\begin{proof}
We start with considering the difference of the change between leader's acceleration and follower's acceleration to obtain for \(t\in[0,T]\)
\begin{align*}
    \ddot{\xl}(t)-\ddot{x}(t)&\geq \ul(t)-a+a\Big(\tfrac{|v(t)|}{v_{\text{free}}}\Big)^{\delta}+a\Big(\tfrac{2\sqrt{ab}(s_{0}+v(t)\tau)+v(t)(v(t)-\vl(t))}{2\sqrt{ab}\big(\xl(t)-x(t)-l\big)}\Big)^{2}\\
    &\geq\ul(t)-a+a\Big(\tfrac{2\sqrt{ab}(s_{0}+v(t)\tau)+v(t)(v(t)-\vl(t))}{2\sqrt{ab}\big(\xl(t)-x(t)-l\big)}\Big)^{2}.
\end{align*}
Let us first assume that \(\vlz-v_{0}< 0\). Then, we know that on a time horizon \([0,t_{1})\)\textcolor{blue}{,} \(\vl(t)-v(t)< 0\) and the distance of follower and leader decreases but there is still no over-taking, i.e., \(\xl(t)-x(t)-l>0\).
Thus, we can continue the previous estimate to arrive at (recall that \(\vl(t)\geqq0\) so that \(v(t)> 0\) as \(v(t)>\vl(t)\) for \(t\in [0,t_{1})\))
\begin{align*}
    \ddot{\xl}(t)-\ddot{x}(t)\geq \essinf_{s\in[0,T]}\ul(s)-a+as_{0}^{2}\tfrac{1}{(\xl(t)-x(t)-l)^{2}}.
\end{align*}
Multiplying with \(\dot{\xl}(t)-\dot{x}(t)< 0\) leads to
\begin{align*}
    \big(\ddot{\xl}(t)-\ddot{x}(t)\big)\big(\dot{\xl}(t)-\dot{x}(t)\big)\leq \Big(\essinf_{s\in[0,T]}\ul(s)-a\Big)\big(\dot{\xl}(t)-\dot{x}(t)\big)+as_{0}^{2}\tfrac{\dot{\xl}(t)-\dot{x}(t)}{(\xl(t)-x(t)-l)^{2}}
\end{align*}
and integrating over \(t\in [0,t_{1}]\) gives
\begin{align*}
    \tfrac{1}{2}(\vl(t)-v(t))^{2}& \leq \tfrac{1}{2}(\vlz-v_{0})^{2}+\Big(\essinf_{s\in[0,T]}\ul(s)-a\Big)\Big(\xl(t)-x(t)-\xlz+x_{0}\Big)\\
    &\qquad -as_{0}^{2}\Big(\tfrac{1}{\xl(t)-x(t)-l} -\tfrac{1}{\xlz-x_{0}-l}\Big)
\end{align*}    
Defining \(A\: -B\cdot(\xlz-x_{0}-l)+a\tfrac{s_{0}^{2}}{\xlz-x_{0}-l}+\tfrac{1}{2}\big(\vlz-v_{0})^{2}>0\) and \(g(t)\:\xl(t)-x(t)-l\) with \(B\: \essinf\limits_{s\in[0,T]}\ul(s)-a.\) we have
\begin{align*}
\tfrac{1}{2}\big(g'(t)\big)^{2}\leq A+Bg(t)-\tfrac{as_{0}^{2}}{g(t)}.
\end{align*}
However, as the left hand side is quadratic, the following inequality needs to hold (recall that $g(t)> 0$ on $[0,t_{1})$)
\[
0\leq Ag(t)+Bg(t)^{2}-as_{0}^{2}.
\]
Recalling that \(B<0\), we thus obtain as lower bound
\[
g^{*}\:\tfrac{-A+\sqrt{A^{2}+4aBs_{0}^{2}}}{2B}>0
\]
which is greater zero as \(A>0\) and \(aBs_{0}^{2}<0\) by assumption.

However, this is only a lower bound for the first time the relative velocity is negative as we needed to address the case where the initial relative velocity is negative. 
In the case where \(\vlz>v_{0}\) we will derive in the following a uniform lower bound.
In both cases, assume that there is another time \(t_{2},t_{3}\in (t_{1},T)\) so that
\(\vl(t)-v(t)<0\forall t\in(t_{2},t_{3}) \), we can assume that \(\vl(t_{2})=v(t_{2})\). Applying then the previous estimates once more, we obtain this time as lower bound
\[
\xl(t)-x(t)-l\geq \tfrac{-A_{2}+\sqrt{A^{2}_{2}+4aBs_{0}^{2}}}{2B}\ \forall t\in(t_{2},t_{3})
\]
with \(A_{2}\:-B\cdot(\xl(t_{2})-x(t_{2})-l)+a\tfrac{s_{0}^{2}}{\xl(t_{2})-x(t_{2})-l}>0\).
Looking into the discriminant we find that
\[
 A^{2}_{2}+4aBs_{0}^{2}=\Big(\tfrac{as_{0}^{2}}{\xl(t_{2})-x(t_{2})-l}+B(\xl(t_{2})-x(t_{2})-l)\Big)^{2}
\]
so that we obtain for \(t\in(t_{2},t_{3})\)
\[
\xl(t)-x(t)-l\geq \begin{cases}\xl(t_{2})-x(t_{2})-l & \text{ if }\xl(t_{2})-x(t_{2})-l\leq \sqrt{\tfrac{as_{0}^{2}}{-B}}\\
\tfrac{-a\tfrac{s_{0}^{2}}{\xl(t_{2})-x(t_{2})-l}}{B} & \text{ if }\xl(t_{2})-x(t_{2})-l\geq \sqrt{\tfrac{as_{0}^{2}}{-B}}
\end{cases}.
\]
Recalling that we can estimate from the previous (first) step and the fact that $\xl(t)-x(t)-l$ is non-decreasing between $t_{1}$ and $t_{2}$
\[
\xl(t_{2})-x(t_{2})-l\geq \min\big\{g^{*},\xlz-x_{0}-l\big\}
\]
we obtain with the previous estimate that for any \(t\in[0,t_{3}]\)
\[
\xl(t)-x(t)-l\geq \min\Big\{g^{*},\xlz-x_{0}-l,\sqrt{\tfrac{as_{0}^{2}}{-B}}\Big\}.
\]
However, this lower bound is independent of \(\xl(t_{2})-x(t_{2})-l\) and we can thus iterated the procedure by going to the next time where \(\vl(t)-v(t)<0\) for some \(t\in (t_{3},T]\). However, in these cases the previously derived bound remains as is. 

Looking into the derived lower bound in more details, one can actually distinguish the two cases \(\vlz-v_{0}\leq (\geq0)\) and arrives as the obtained bounds. This concludes the proof.
\end{proof}

\begin{remark}[Comments on the derived ``safety distance'' and the ``extreme'' case \(B=0\)]
We have not commented about the sign of \(A\) and \(B\) in \cref{defi:A_B}.
Clearly, assuming that \(B\leq 0\) is reasonable as otherwise it holds
\[
\essinf_{t\in[0,T]}\ul(s)\geq a,
\]
meaning that the leader speeds up all the time at least with the maximal acceleration of the follower, implying that the distance will always increase.
Thus, assuming \(B\leq 0\) the term \(\tfrac{-A}{2B}\) is positive and \(\frac{\sqrt{A^{2}+4aBs_{0}^{2}}}{2B}\) negative but its absolute values is smaller than \(\tfrac{-A}{2B}\) so that the obtained lower bound in \cref{eq:lower_bound_headway_2} is still positive.

The lower bound together with the corresponding simulations is illustrated for a specific experimental setup in \cref{fig:headway}.
As the lower bound is not well-defined for \(B=0\), we compute the limes of the lower bound. Recalling that \(A\) is also a function of \(B\) namely, \(A(B)= -B\cdot(\xlz-x_{0}-l)+a\tfrac{s_{0}^{2}}{\xlz-x_{0}-l}+\tfrac{1}{2}(\vlz-v_{0})^{2}\) we have
\begin{align*}
   \lim_{B\rightarrow 0} \tfrac{-A(B)+\sqrt{A(B)^{2}+4aBs_{0}^{2}}}{2B}&=\lim_{B\rightarrow 0}\tfrac{-A'(B)}{2}+\tfrac{A(B)A'(B)+2as_{0}^{2}}{2\sqrt{A(B)^{2}+4aBs_{0}^{2}}}\\
   &=\tfrac{as_{0}^{2}}{\tfrac{as_{0}^{2}}{\xlz-x_{0}-l}+\tfrac{1}{2}(\vlz-v_{0})^{2}}.
\end{align*}
Note that the obtained lower bound is always less or equal to the initial space headway, \(\xlz-x_{0}-l\). Furthermore, negative relative initial velocity leads to smaller lower bound.  But in the case of positive initial relative velocity, the given lower bound is very conservative and could be replaced by a stricter one. Altogether, even for \(B=0\) the obtained lower bound is reasonable.
\end{remark}

\section{Improvements for the IDM}\label{sec:IDM_improvements}
In this section, we present several improvements of the IDM to fix the problems illustrated in 
\Cref{sec:counterexamples} for general initial datum.
Before doing this, however, we present some other numerics on how the classical IDM behaves for specific data. This will serve as a comparison to the proposed improvements later:
\begin{example}[Some additional numerical results for the classical IDM]
All examples -- except those which are physically unreasonable (compare \Cref{subsec:projection_nonnegative_bounded_acceleration}) -- will be tested on three different scenarios:
\begin{enumerate}[leftmargin=15pt]
    \item As the set of parameters where one can observe a negative velocity of the follower in the original IDM -- see \cref{ex:negative_velocity} and \cref{fig:example_negative_speed_1}.\label{item:case_1}
    \item As the set of parameters where one can observe that the velocity of the follower diverges to \(-\infty\) in finite time in the original IDM -- see \cref{ex:negative_velocity_blowup} and \cref{fig:speed_minus_infty}. \label{item:case_2}
    \item A heavy stop and go wave traffic situation with the leader's acceleration satisfying \(u_{\text{lead}}\equiv a\cdot\mathds{1}_{\{t\in[0,T]:\ \sin(t/4)\geq 0.8\}} - a\cdot\mathds{1}_{\{t\in[0,T]:\ \sin(t/4)\leq -0.8\}}\),\  \(a=0.73,\ b=1.67,\ v_{\text{free}}=\frac{120}{36},\ \tau=1.6,\ l=4,\ s_{0}=2,\ d=4\) illustrated in \cref{fig:acc_leader}. The results for the original IDM are then illustrated in \cref{fig:original_IDM_stop_and_go} and can serve as comparison.
    \label{item:case_3}
\end{enumerate}
\begin{figure}
\centering
    \begin{tikzpicture}[scale=0.6]
        \begin{axis}[xscale=2,
            width=0.5\textwidth,  ylabel style={yshift=-1.5cm},
            xlabel={\(t\)},
            ylabel={\(u_{\text{lead}}(t)\)},
                 legend pos={north west} ]
        \addplot[color=blue,line width=1.5pt,color=red,domain=0:3.71] {0};
        \addplot[color=blue,line width=1.5pt,color=red,domain=0:0.73] (3.71,x);
        \addplot[color=blue,line width=1.5pt,color=red,domain=3.71:8.86] {0.73};
        \addplot[color=blue,line width=1.5pt,color=red,domain=0:0.73] (8.86,0.73-x);
        \addplot[color=blue,line width=1.5pt,color=red,domain=8.86:16.28] {0};
        \addplot[color=blue,line width=1.5pt,color=red,domain=0:0.73] (16.28,-x);
        \addplot[color=blue,line width=1.5pt,color=red,domain=16.28:21.43] {-0.73};
        \addplot[color=blue,line width=1.5pt,color=red,domain=0:0.73] (21.43,-0.73+x);
        \addplot[color=blue,line width=1.5pt,color=red,domain=21.43:28.85] {0};
        \addplot[color=blue,line width=1.5pt,color=red,domain=0:0.73] (28.85,x);
        
        \addplot[color=blue,line width=1.5pt,color=red,domain=28.85:34.00] {0.73};
        \addplot[color=blue,line width=1.5pt,color=red,domain=0:0.73] (34.00,0.73-x);
        \addplot[color=blue,line width=1.5pt,color=red,domain=34.00:41.42] {0};
        \addplot[color=blue,line width=1.5pt,color=red,domain=0:0.73] (41.42,-x);
        
        \addplot[color=blue,line width=1.5pt,color=red,domain=41.42:46.57] {-0.73};
        \addplot[color=blue,line width=1.5pt,color=red,domain=0:0.73] (46.57,-0.73+x);
        \addplot[color=blue,line width=1.5pt,color=red,domain=46.57:53.99] {0};
        \addplot[color=blue,line width=1.5pt,color=red,domain=0:0.73] (53.99,x);
        
        \addplot[color=blue,line width=1.5pt,color=red,domain=53.99:59.14] {0.73};
        \addplot[color=blue,line width=1.5pt,color=red,domain=0:0.73] (59.14,0.73-x);
        \addplot[color=blue,line width=1.5pt,color=red,domain=59.14:66.56] {0};
        \addplot[color=blue,line width=1.5pt,color=red,domain=0:0.73] (66.56,-x);
        
        \addplot[color=blue,line width=1.5pt,color=red,domain=66.56:71.71] {-0.73};
        \addplot[color=blue,line width=1.5pt,color=red,domain=0:0.73] (71.71,-0.73+x);
        \addplot[color=blue,line width=1.5pt,color=red,domain=71.71:79.13] {0};
        \addplot[color=blue,line width=1.5pt,color=red,domain=0:0.73] (79.13,x);
        
        \addplot[color=blue,line width=1.5pt,color=red,domain=79.13:84.28] {0.73};
        \addplot[color=blue,line width=1.5pt,color=red,domain=0:0.73] (84.28,0.73-x);
        \addplot[color=blue,line width=1.5pt,color=red,domain=84.28:91.70] {0};
        \addplot[color=blue,line width=1.5pt,color=red,domain=0:0.73] (91.70,-x);
        
        \addplot[color=blue,line width=1.5pt,color=red,domain=91.70:96.85] {-0.73};
        \addplot[color=blue,line width=1.5pt,color=red,domain=0:0.73] (96.85,-0.73+x);
        \addplot[color=blue,line width=1.5pt,color=red,domain=96.85:100] {0};
\end{axis}
    \end{tikzpicture}
    \caption{The acceleration profile \(u_{\text{lead}}\) of the leader. This is intended to specify a leader that repeats the pattern ``accelerate, constant velocity, decelerate, constant velocity''.}
    \label{fig:acc_leader}
\end{figure}
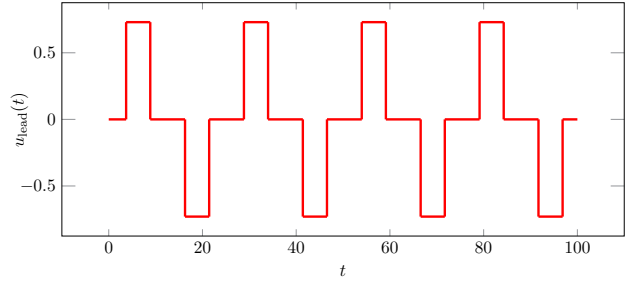

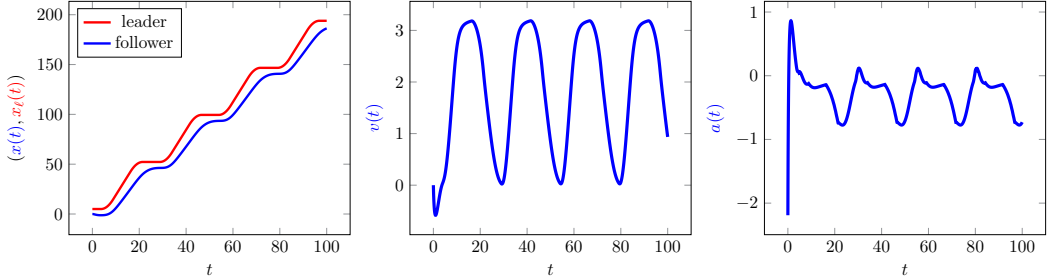
\begin{figure}
\centering
    \begin{tikzpicture}[scale=0.6]
        \begin{axis}[
            width=0.5\textwidth,
            xlabel={\(t\)},
            ylabel={\((\mathcolor{blue}{x(t)},\mathcolor{red}{\xl(t)})\)},
                 legend pos={north west}       ]
        \addplot[color=red,line width=1.5pt, mark=none]  table [y index=1, x index=0, col sep=comma] {Tikz_data/IDM_positions.csv};
        \addplot[color=blue,line width=1.5pt, mark=none]  table [y index=2, x index=0, col sep=comma] {Tikz_data/IDM_positions.csv};
        \addlegendentry{leader};
        \addlegendentry{follower};
        \end{axis}
    \end{tikzpicture}
      \begin{tikzpicture}[scale=0.6]
          \begin{axis}[
            width=0.5\textwidth,
            xlabel={\(t\)},
            ylabel={\(\mathcolor{blue}{v(t)}\)},
                 legend pos={north west}       ]
          \addplot[color=blue,line width=2pt, mark=none]  table [y index=1, x index=0, col sep=comma] {Tikz_data/IDM_velocity.csv};
          \end{axis}
          \end{tikzpicture}
     \begin{tikzpicture}[scale=0.6]
          \begin{axis}[
            width=0.5\textwidth,
            xlabel={\(t\)},
            ylabel={\(\mathcolor{blue}{a(t)}\)},
                 legend pos={north west}       ]
          \addplot[color=blue,line width=2pt, mark=none]  table [y index=1, x index=0, col sep=comma] {Tikz_data/IDM_acc.csv};
          \end{axis}
          \end{tikzpicture}
        \caption{The original IDM, with parameters \(x_{0}=0,\ \xlz=l+\textcolor{red}{1}<l+s_{0},\ v_{0}=0=\vlz=0\). Both vehicles start with \(0\) velocity, and the leader follows the acceleration profile in \cref{fig:acc_leader}. 
        \textbf{Left:} the position of both \textcolor{red}{leader} and \textcolor{blue}{follower}, \textbf{middle:} the velocity of the \textcolor{blue}{follower} and \textbf{right:} the acceleration of the \textcolor{blue}{follower}. We leave the velocity and acceleration profile of the leader out as it is fully determined by the given \(u_{\text{lead}}\).} 
                \label{fig:original_IDM_stop_and_go}
        \end{figure}
\begin{figure}
\centering
    \begin{tikzpicture}[scale=0.6]
        \begin{axis}[
            width=0.5\textwidth,
            xlabel={\(t\)},
            ylabel={\(\xl(t)-x(t)-l\)},
                 legend pos={north west}       ]
        \addplot[color=magenta,line width=1.5pt, mark=none]  table [y index=3, x index=0, col sep=comma] {Tikz_data/IDM_positions.csv};
        \addplot[color=yellow!50!orange,dotted,line width=1.5pt, mark=none, domain=0:100] {1};
        \end{axis}
    \end{tikzpicture}
        \caption{The original IDM with the same parameters as in \cref{fig:original_IDM_stop_and_go}. Headway (in \textcolor{magenta}{magenta}) \(\xl(t)-x(t)-l\) is illustrated together with the lower ``a priori'' bound as derived in \cref{eq:lower_bound_headway} in \cref{theo:classical_IDM_safety_distance}. Computing this headway, we have for the numbers \(A,B\) as in \cref{defi:A_B} \(B=-0.73 -0.73=-1.46, \) \(A=1.46\cdot 1+ 0.73\tfrac{4}{1}=4.38\) and \(s_{0}=2,\) and thus as lower bound on the headway is -- following \cref{eq:lower_bound_headway} (recall that \(\vlz-v_{0}=0\) so that this case applied)
        \(\min\bigg\{\xlz-x_{0}-l,\sqrt{\tfrac{as_{0}^{2}}{-B}}\bigg\}=1\) which is pictured in \textcolor{yellow!50!orange}{dotted yellow}. As can be seen the lower bound is always valid and for \(t=0\) even sharp.} 
                \label{fig:headway}
        \end{figure}
\end{example}
\subsection{Projection on nonnegative velocities and restricting the maximal deceleration}
A straight forward improvement consists of projecting the velocity to nonnegative values. This is detailed in the following \cref{defi:IDM_projected_velocities}:
\begin{definition}[IDM with projection to nonnegative
velocities]\label{defi:IDM_projected_velocities}
Given \cref{ass:input_datum}, we replace the acceleration in \cref{defi:IDM_acc} and velocity for the IDM  model in \cref{defi:IDM_model} by
\begin{align}
    \dot{x}(t) &= \max\{v(t),0\},&& t\in[0,T]\\\label{eq:4242}
    \dot{v}(t) &= \Acc\big(x(t),\max\{v(t),0\},\xl(t),\vl(t)\big), && t\in[0,T]
\end{align}
and call the model the \textbf{velocity projected IDM}.
\end{definition}
\begin{theorem}[Existence and uniqueness of solutions for small times]
Given \cref{ass:input_datum} the velocity projected IDM in \cref{defi:IDM_projected_velocities} admits on a sufficiently small time horizon \(T^{*}\in\R_{>0}\) a unique solution \((x,v)\in W^{1,\infty}((0,T^{*}))^{2}\).
\end{theorem}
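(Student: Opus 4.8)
The plan is to follow the same route as in \cref{theo:well_posedness_small_time_horizon}: recast the modified follower dynamics as an ordinary differential equation whose right-hand side is continuous in time and locally Lipschitz in the state near the initial datum, and then invoke the Picard--Lindel\"of theorem.

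First I would note that, since $\ul\in\mathcal{U}_{\text{lead}}\subset L^{\infty}((0,T))$, the leader's trajectory is determined and regular: $\vl(t)=\vl_{0}+\int_{0}^{t}\ul(s)\dd s$ is Lipschitz continuous and $\xl(t)=\xl_{0}+\int_{0}^{t}\vl(s)\dd s$ is $C^{1}$ on $[0,T]$, so both are continuous functions of $t$ which remain close to $(\xl_{0},\vl_{0})$ on a short time interval. Next I would write the pair $(x,v)$ of \cref{defi:IDM_projected_velocities} as $\dot{\boldsymbol z}=f(t,\boldsymbol z)$, $\boldsymbol z=(x,v)$, with
\[
f(t,x,v)\:\big(\,\max\{v,0\},\ \Acc\big(x,\max\{v,0\},\xl(t),\vl(t)\big)\,\big),
\]
and verify that $f$ is locally Lipschitz in $(x,v)$. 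Indeed, by \cref{ass:input_datum} we have $\xl_{0}-x_{0}-l>0$, so by continuity one can pick $T^{*}\in(0,T)$ and a ball around $(x_{0},v_{0})$ on which $\xl(t)-x-l$ stays bounded away from $0$; hence the arguments of $\Acc$ remain in a compact subset of $\mathcal A$, where $\Acc$ is $C^{1}$ --- here the hypothesis $\delta>1$ is used, so that $v\mapsto|v|^{\delta}$ is differentiable --- and therefore Lipschitz. Since $v\mapsto\max\{v,0\}$ is globally $1$-Lipschitz, the composition $f$ is still Lipschitz in $(x,v)$, uniformly for $t\in[0,T^{*}]$, and continuous in $t$ through the continuity of $\xl,\vl$.

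I would then apply the Picard--Lindel\"of theorem (as cited after \cref{theo:well_posedness_small_time_horizon}) to obtain, after shrinking $T^{*}$ if necessary, a unique solution with $(x(0),v(0))=(x_{0},v_{0})$ whose trajectory stays in the chosen ball and whose $\Acc$-arguments stay in $\mathcal A$, so that the equation holds classically; since $f$ is continuous in $t$, this solution is in fact $C^{1}$. Finally, $\dot x=\max\{v,0\}$ and $\dot v=\Acc(\dots)$ are bounded on the compact interval $[0,T^{*}]$ (the first because $v$ is continuous there, the second because $\Acc$ is continuous on the relevant compact set), which upgrades the solution to $(x,v)\in W^{1,\infty}((0,T^{*}))^{2}$, as claimed.

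I do not expect a genuine obstacle here: the only novelty compared with \cref{theo:well_posedness_small_time_horizon} is the kink introduced by $\max\{\cdot,0\}$, and this is harmless because composing a locally Lipschitz function with a globally Lipschitz one is again locally Lipschitz, so the Picard--Lindel\"of hypotheses are preserved. The only points requiring (minor) care are keeping the state inside $\mathcal A$ on $[0,T^{*}]$ --- immediate from the strict inequality $\xl_{0}-x_{0}-l>0$ and continuity --- and recording the time-continuity of $f$ so that the $L^{\infty}$ leading input does not force a weaker (Carath\'eodory) notion of solution.
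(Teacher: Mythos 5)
Your argument is correct and follows essentially the same route as the paper, which simply observes that the right-hand side remains locally Lipschitz (since $\max\{\cdot,0\}$ is globally Lipschitz) and invokes Picard--Lindel\"of as in \cref{theo:well_posedness_small_time_horizon}. Your version merely fills in the details the paper omits, including the useful observations that $\delta>1$ keeps $v\mapsto|v|^{\delta}$ differentiable and that solving for the leader first reduces the problem to a system with continuous time-dependence.
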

\begin{proof}
The proof is almost identical to the proof of \cref{theo:well_posedness_small_time_horizon} when recalling that the right hand side is still locally Lipschitz-continuous. We do not go into details.
\end{proof}
However, as we will see the model has some drawbacks:
\begin{example}[The following vehicle waits for too long to start driving] Assume that the constants and initial data as in \cref{ex:negative_velocity} with $\varepsilon < s_0$. Then, the leading vehicle's trajectory can be computed as
\begin{align*}
    \xl(t)= \xlz+\vlz t+\tfrac{u}{2}t^{2}, \ t\in [0, T],
\end{align*}
and the leading vehicle's velocity can be computed as 
\begin{align*}
    \vl(t) = \vlz + u t, \ t\in [0, T].
\end{align*}
Plugging this into the change of the vehicle's velocity we obtain the following system of ODEs
\begin{align*}
    \dot{x}(t)&=\max\{v(t),0\}\\
    \dot{v}(t)&=a\bigg(1- \Big(\tfrac{\max\{v(t),0\}}{v_{\text{free}}}\Big)^{\delta}-\Big(\tfrac{2\sqrt{ab}(s_{0}+\max\{v(t),0\}\tau)+\max\{v(t),0\}(\max\{v(t),0\}-\vl(t))}{2\sqrt{ab}( \xl(t)-\xlz+\epsilon-\int_{0}^{t}\max\{v(s),0\}\dd s)}\Big)^{2}\bigg)\\
    x(0)&= \xlz-l-\epsilon\\
    v(0)&=0.    
\end{align*}
Note that
\begin{align*}
    \dot{v}(0) = a\left(1-\left(\tfrac{s_0}{\varepsilon}\right)^2\right)<0,
\end{align*}
therefore, there exists some small time interval $[0, t_1]$ such that for every $t \in [0, t_1]$, $v(t) <0$. During the time interval $[0, t_1]$, the distance between the two vehicles is 
\[l+\varepsilon+\xl(t)-\xlz = l+\varepsilon+\vlz t+\tfrac{u}{2}t^2,\quad t \in [0, t_1].\]
\end{example}
Thus, for every $t\in [0, t_1]$, 
\begin{align*}
    \dot{v}(t)=a\bigg(1 -\Big(\tfrac{s_{0}}{\varepsilon+\vlz t +\tfrac{u}{2}t^2}\Big)^{2}\bigg). 
\end{align*}
Note that $\dot{v} \colon [0, t_1] \mapsto \mathbb{R}$ is strictly increasing. 
Without loss of generality, we assume that the initial velocity of the leading vehicle is $\vlz=0$ and the acceleration of the leading vehicle is $u=2$. Then for every $t \in [0, t_1]$, 
\begin{align*}
    v(t)=&\int_0^t  a\Big(1 -\Big(\tfrac{s_{0}}{\varepsilon +s^2}\Big)^{2}\Big)  \dd s = at - 4as_0^2 \int_{0}^{t} \left(\tfrac{1}{s^2+\varepsilon} \right)^2\dd s
    = at - 4as_0^2\left(\tfrac{\frac{\sqrt{\varepsilon}t}{\varepsilon+t^2}+\arctan\left(\frac{t}{\sqrt{\varepsilon}}\right)}{2\varepsilon^{\frac{3}{2}}}\right).
\end{align*}
In particular, as illustrated in \cref{fig:Example5.3}, we have that $t_1$ increases as $\varepsilon$ decreases. That is, the smaller the initial distance between the two vehicles, the longer it takes the following vehicle to recover its positive velocity. 
\begin{figure}
    \centering
    \includegraphics[width=0.5\textwidth]{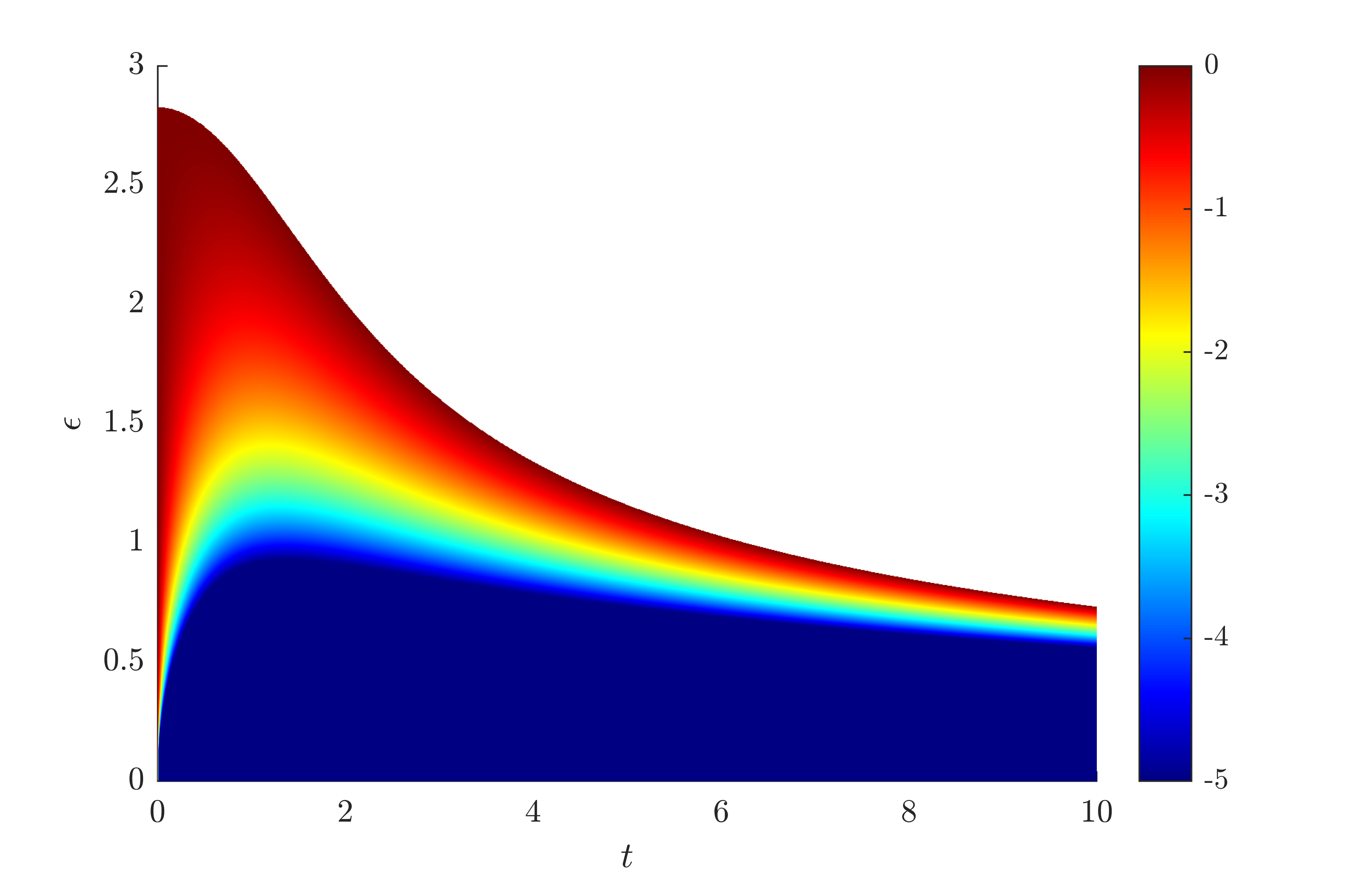}
    \caption{Velocity dependent on epsilon and time. The dark blue indicates values less or equal \(-5\) and the white area positive function values, so that the red curve separating the white and colored region can be seen as the values where the velocity is actually zero. In particular, as the initial distance between two vehicles $\varepsilon$ increases, the time when the following vehicle recovers its positive velocity $t_1$ decreases.}
    \label{fig:Example5.3}
\end{figure}
Another example illustrates the projected velocity model numerically with regard to other scenarios:
\begin{example}[Velocity projected IDM]
As can be observed the actual velocity in all the three different scenarios is bounded from below by zero and the solution exists on the entire time horizon considered. However, the projection operator leads to the problem that the follower waits too long until they speed up. This can be observed in particular in \cref{fig:projected_IDM_large_spacing_0,fig:projected_IDM_large_spacing} where the distance of the two vehicles after both have started speeding up (\(t\approx 5\)) is approximately around \(8.5\) which is quite far from the comfortable vehicle distance \(s_{0}\) and thus leading to a too large distance. Here we use the free-flow acceleration as following 
\begin{equation}
\dot{v}(t) = a-a \Big(\tfrac{\max\{v(t),0\}}{v_{\text{free}}}\Big)^{\delta}\ \forall t \in [0, T]. 
\label{free_flow_dynamics}
\end{equation} Same can be observed in \cref{fig:projected_IDM_large_spacing_2} for smaller time.
\begin{figure}
\centering
    \begin{tikzpicture}[scale=0.6]
        \begin{axis}[
            width=0.5\textwidth,
            xlabel={\(t\)},
            ylabel={\((\mathcolor{blue}{x(t)},\mathcolor{red}{\xl(t)})\)},
                 legend pos={north west}       ]
        \addplot[color=red,line width=1.5pt, mark=none]  table [y index=1, x index=0, col sep=comma] {Tikz_data/IDM_m1_positions_negative_v.csv};
        \addplot[color=blue,line width=1.5pt, mark=none]  table [y index=2, x index=0, col sep=comma] {Tikz_data/IDM_m1_positions_negative_v.csv};
        \addlegendentry{leader};
        \addlegendentry{follower};
        \end{axis}
    \end{tikzpicture}
      \begin{tikzpicture}[scale=0.6]
          \begin{axis}[
            width=0.5\textwidth,
            xlabel={\(t\)},
            ylabel={\(\mathcolor{blue}{v(t)}\)},
                 legend pos={north west}       ]
          \addplot[color=blue,line width=2pt, mark=none]  table [y index=1, x index=0, col sep=comma] {Tikz_data/IDM_m1_velocity_negative_v.csv};
          \end{axis}
          \end{tikzpicture}
     \begin{tikzpicture}[scale=0.6]
          \begin{axis}[
            width=0.5\textwidth,
            xlabel={\(t\)},
            ylabel={\(\mathcolor{blue}{a(t)}\)},
                 legend pos={north west}       ]
          \addplot[color=blue,line width=2pt, mark=none]  table [y index=1, x index=0, col sep=comma] {Tikz_data/IDM_m1_acc_negative_v.csv};
          \end{axis}
          \end{tikzpicture}
        \caption{The IDM with projection as in \cref{defi:IDM_projected_velocities} and same parameters as in \cref{fig:example_negative_speed_1} with \(x_{0}=0,\ \xlz=l+1.5<l+s_{0}\). \textbf{Left:} the positions of the vehicles, \textbf{middle:} the velocity of the \textcolor{blue}{follower}, and \textbf{right:} the acceleration of the \textcolor{blue}{follower}. The leader follows the free-flow acceleration as in \cref{free_flow_dynamics}. The follower stays still and waits until there is a safe space to speed up.}
        \label{fig:projected_IDM_large_spacing_0}
\end{figure}
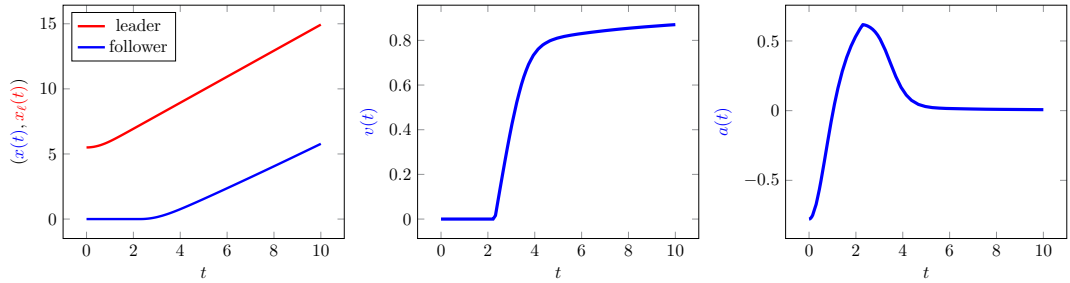

\begin{figure}
\centering
    \begin{tikzpicture}[scale=0.6]
        \begin{axis}[
            width=0.5\textwidth,
            xlabel={\(t\)},
            ylabel={\((\mathcolor{blue}{x(t)},\mathcolor{red}{\xl(t)})\)},
                 legend pos={north west}       ]
        \addplot[color=red,line width=1.5pt, mark=none]  table [y index=1, x index=0, col sep=comma] {Tikz_data/IDM_m1_positions_inf_v.csv};
        \addplot[color=blue,line width=1.5pt, mark=none]  table [y index=2, x index=0, col sep=comma] {Tikz_data/IDM_m1_positions_inf_v.csv};
        \addlegendentry{leader};
        \addlegendentry{follower};
        \end{axis}
    \end{tikzpicture}
      \begin{tikzpicture}[scale=0.6]
          \begin{axis}[
            width=0.5\textwidth,
            xlabel={\(t\)},
            ylabel={\(\mathcolor{blue}{v(t)}\)},
                 legend pos={north west}       ]
          \addplot[color=blue,line width=2pt, mark=none]  table [y index=1, x index=0, col sep=comma] {Tikz_data/IDM_m1_velocity_inf_v.csv};
          \end{axis}
          \end{tikzpicture}
     \begin{tikzpicture}[scale=0.6]
          \begin{axis}[
            width=0.5\textwidth,
            xlabel={\(t\)},
            ylabel={\(\mathcolor{blue}{a(t)}\)},
                 legend pos={north west}       ]
          \addplot[color=blue,line width=2pt, mark=none]  table [y index=1, x index=0, col sep=comma] {Tikz_data/IDM_m1_acc_inf_v.csv};
          \end{axis}
          \end{tikzpicture}
        \caption{The IDM with projection, with the same parameters as in \cref{fig:example_negative_speed_1} and \(x_{0}=0,\ \xlz=l+0.5<l+s_{0},\ v_{0}=0=\vlz=0\). \textbf{Left:} vehicles' positions, \textbf{middle:} the \textcolor{blue}{follower's} velocity and \textbf{right:} the \textcolor{blue}{follower's} acceleration. The leader follows the free flow acceleration as in \cref{free_flow_dynamics}.  The follower stays still and waits until there is a safe space to speed up.     }\label{fig:projected_IDM_large_spacing}
\end{figure}
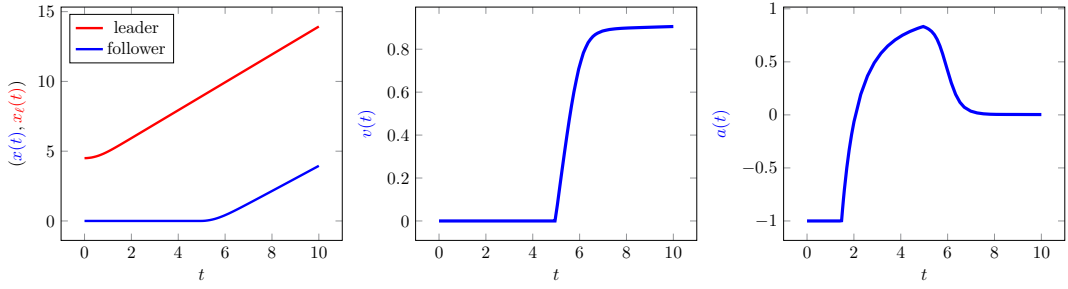

\begin{figure}
\centering
    \begin{tikzpicture}[scale=0.6]
        \begin{axis}[
            width=0.5\textwidth,
            xlabel={\(t\)},
            ylabel={\((\mathcolor{blue}{x(t)},\mathcolor{red}{\xl(t)})\)},
                 legend pos={north west}       ]
        \addplot[color=red,line width=1.5pt, mark=none]  table [y index=1, x index=0, col sep=comma] {Tikz_data/IDM_m1_positions.csv};
        \addplot[color=blue,line width=1.5pt, mark=none]  table [y index=2, x index=0, col sep=comma] {Tikz_data/IDM_m1_positions.csv};
        \addlegendentry{leader};
        \addlegendentry{follower};
        \end{axis}
    \end{tikzpicture}
      \begin{tikzpicture}[scale=0.6]
          \begin{axis}[
            width=0.5\textwidth,
            xlabel={\(t\)},
            ylabel={\(\mathcolor{blue}{v(t)}\)},
                 legend pos={north west}       ]
          \addplot[color=blue,line width=2pt, mark=none]  table [y index=1, x index=0, col sep=comma] {Tikz_data/IDM_m1_velocity.csv};
          \end{axis}
          \end{tikzpicture}
     \begin{tikzpicture}[scale=0.6]
          \begin{axis}[
            width=0.5\textwidth,
            xlabel={\(t\)},
            ylabel={\(\mathcolor{blue}{a(t)}\)},
                 legend pos={north west}       ]
          \addplot[color=blue,line width=2pt, mark=none]  table [y index=1, x index=0, col sep=comma] {Tikz_data/IDM_m1_acc.csv};
          \end{axis}
          \end{tikzpicture}
        \caption{The IDM with projection as in \cref{defi:IDM_projected_velocities} and initial data \(x_{0}=0,\ \xlz=l+\textcolor{red}{1}<l+s_{0},\ v_{0}=\vlz=0\). \textbf{Left:} the position of the vehicles, \textbf{middle:} the velocity of the \textcolor{blue}{follower} and \textbf{right:} the acceleration of the follower.
    The \textcolor{red}{leader} follows the acceleration profile in \cref{fig:acc_leader}. The velocities remain positive, however the waiting time for the follower until speeding up is large.}
    \label{fig:projected_IDM_large_spacing_2}
\end{figure}
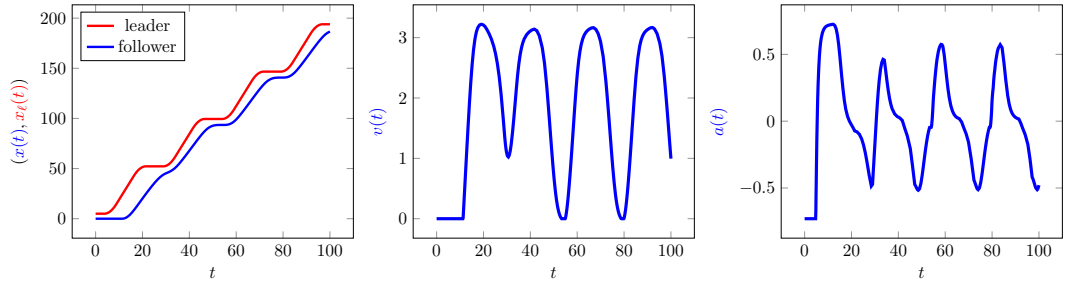

\end{example}
\subsection{Projection to nonnegative velocities with bounded deceleration}\label{subsec:projection_nonnegative_bounded_acceleration}
Another improvement for this is projecting the acceleration to prevent it from becoming too negative.
Then, the corresponding model reads as
\begin{definition}[IDM with projection to nonnegative velocities and bounded deceleration]\label{defi:IDM_projected_velocities_acceleration}
Given \cref{ass:input_datum}, we replace the acceleration in \cref{defi:IDM_acc} and velocity for the IDM  model in \cref{defi:IDM_model} by
\begin{align*}
    \dot{x}(t) &= \max\{v(t),0\},&& t\in[0,T]\\
    \dot{v}(t) &= \max\{\Acc(x(t),\max\{v(t),0\},\xl(t),\vl(t)),-a_{\min}\}, && t\in[0,T]
\end{align*}
with a parameter \(a_{\min}\in\R_{>0}\) be given and call the model the \textbf{acceleration projected IDM}.
\end{definition}
\begin{theorem}[Global existence and uniqueness of solutions]\label{theo:existence_acceleration_projected_IDM}
Given \cref{ass:input_datum} the acceleration projected IDM in \cref{defi:IDM_projected_velocities_acceleration} admits for every \(T\in\R_{>0}\) a unique solution \((x,v)\in W^{1,\infty}((0,T))\).
\end{theorem}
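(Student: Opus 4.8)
The plan is the classical local-existence-plus-a-priori-bounds scheme; what is specific to this model is that the deceleration cap $\max\{\,\cdot\,,-a_{\min}\}$ is precisely what neutralises the singularity of $\Acc$ at the ``contact'' set $\{\xl-x-l=0\}$ (on which \cref{defi:IDM_acc} is \emph{a priori} undefined), so that -- unlike for the plain IDM or the merely velocity-projected IDM -- global well-posedness comes without any collision-avoidance estimate of the type of \cref{theo:lower_bounds}.

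\emph{Step 1: a regularised right-hand side.} Fix $T>0$ and recall from \cref{ass:input_datum} that $\ul\in L^\infty((0,T))$, so along the prescribed leader dynamics $\vl\in W^{1,\infty}((0,T))$ is bounded and $\xl$ is Lipschitz on $[0,T]$. The first task is to check that $(x,v)\mapsto\max\{\Acc(x,\max\{v,0\},\xl(t),\vl(t)),-a_{\min}\}$ is well defined, bounded, and locally Lipschitz in $(x,v)$ uniformly for $t\in[0,T]$: $\Acc$ is smooth off the contact set and tends to $-\infty$ as $\xl(t)-x-l\to0$, so the outer maximum equals the constant $-a_{\min}$ on a full neighbourhood of $\{\xl(t)-x-l=0\}$ (where one defines it to be $-a_{\min}$) and is smooth elsewhere; together with the global Lipschitz continuity of $v\mapsto\max\{v,0\}$ this makes the right-hand side of \cref{defi:IDM_projected_velocities_acceleration} a Carath\'eodory vector field, locally Lipschitz in the state. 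The Carath\'eodory existence--uniqueness theorem (Picard--Lindel\"of as in \cref{theo:well_posedness_small_time_horizon}) then yields a unique maximal solution on some $[0,T_{\max})$.

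\emph{Step 2: a priori bounds.} The cap gives $\dot v\ge-a_{\min}$ at once, hence $v(t)\ge v_0-a_{\min}T$ on $[0,T]$. For the upper bound I would rerun the argument of \cref{lem:boundedness_velocity}: if $v\ge v_{\text{free}}$ then $(|v|/v_{\text{free}})^{\delta}\ge1$, so $\Acc\le0$ and therefore $\dot v=\max\{\Acc,-a_{\min}\}<0$, which keeps $v\le\max\{v_0,v_{\text{free}}\}$ for as long as the solution exists. Consequently $\dot x=\max\{v,0\}\in[0,\max\{v_0,v_{\text{free}}\}]$, so $x$ is bounded and Lipschitz, while $\Acc\le a$ always gives $\dot v\le a$. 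Thus $(x,v)$ stays in a fixed bounded set on $[0,T_{\max})\cap[0,T]$ with $(\dot x,\dot v)\in L^\infty$.

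\emph{Step 3: conclusion.} Boundedness of the solution on its maximal interval rules out finite-time escape, so the continuation principle gives $T_{\max}>T$ and $(x,v)$ is defined on $[0,T]$; since $\dot x,\dot v\in L^\infty((0,T))$ we get $(x,v)\in W^{1,\infty}((0,T))$, and uniqueness on $[0,T]$ propagates from the local uniqueness of Step 1 by a connectedness argument. The step I expect to require real care is Step 1 -- making the statement ``the inner cap absorbs the $-\infty$ of $\Acc$ in a locally-Lipschitz fashion'' fully rigorous -- and I would also flag that this argument gives well-posedness of the ODE but \emph{not} the stronger property $\xl(t)-x(t)-l>0$ for all $t$, which for initial gaps that are small relative to $v_0$ and $a_{\min}$ can fail and would require a separate, more delicate estimate analogous to \cref{theo:lower_bounds}.
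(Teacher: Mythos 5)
Your proof is correct and takes essentially the same route as the paper's: local well-posedness from (local) Lipschitz continuity of the capped right-hand side, two-sided a priori bounds on the velocity coming from $-a_{\min}\le \dot v\le a$, and continuation to all of $[0,T]$. The paper's version is terser -- it only records linear-in-$t$ velocity bounds and never discusses the contact set $\{\xl(t)-x(t)-l=0\}$ -- so your explicit extension of the right-hand side across that set (which \cref{exa:acceleration_projected_IDM_unreasonable} shows is actually reached) and your closing remark that no positive lower bound on the gap is being claimed are refinements of, not departures from, the paper's argument.
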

\begin{proof}
The proof of existence and uniqueness for small time is almost identical to the proof of \cref{theo:well_posedness_small_time_horizon} when recalling that the right hand side is still locally Lipschitz-continuous. We do not go into details.

So it remains to show that we can find uniform estimates for \((x(t),v(t)),\ t\in[0,T]\).
Obviously, \[v(t)\geq v_{0}-a_{\min}t\quad \forall t\in[0,T].\] Thanks to the structure of \(\Acc\) (see \cref{defi:IDM_acc}) we also obtain as a bound from above
\[v(t)\leq v_{0}+at\quad \forall t\in[0,T].\]
As \(v\) is uniformly bounded on every finite time horizon, so is \(x\) and we are done.
\end{proof}
However, although the previous change of the acceleration profile in \cref{defi:IDM_projected_velocities_acceleration} looks promising as according to \cref{theo:existence_acceleration_projected_IDM} a solution exists on every finite time horizon, the physical representation, the model itself is unreasonable as the car behind can overtake the leading car -- or differently put, the car behind can bump into the leading car without the model noticing it. This is detailed in the following \cref{exa:acceleration_projected_IDM_unreasonable}.
\begin{example}[Physical unreasonability]\label{exa:acceleration_projected_IDM_unreasonable}
As the deceleration of the following vehicle is bounded from below by \(-a_{\min},\) we can always chose an initial velocity of the follower which leads to the fact that \(\xl(t)-x(t)-l\rightarrow 0\) in finite time.

In formulae, assume for simplicity that the leading vehicle has the following trajectory
\[
\xl(t)={\xl}_{0}+{\vl}_{0}t + \tfrac{1}{2}u_{\text{lead}}t^{2} ,\qquad t\in[0,T].
\]
with \(u_{\text{lead}}\in\R_{\geq0}\) and \(({\xl}_{0},{\vl}_{0})\in \R\times\R_{>0}\). Then, we take the difference of the vehicles position with car length \(l\in\R_{>0}\) and have for \(t\in[0,T]\)
\begin{align*}
    &\xl(t)-x(t)-l\\
    &\leq {\xl}_{0}+{\vl}_{0}t-l + \tfrac{1}{2}u_{\text{lead}}t^{2}-x_{0}-\int_{0}^{t}\max\{v(s),0\}\dd s\\
    &\leq {\xl}_{0}+{\vl}_{0}t-l + \tfrac{1}{2}u_{\text{lead}}t^{2}-x_{0}-\int_{0}^{t}v(s)\dd s\\
    &\leq {\xl}_{0}+{\vl}_{0}t-l + \tfrac{1}{2}u_{\text{lead}}t^{2}-x_{0}\\
    &\qquad -\int_{0}^{t}v_{0}+\int_{0}^{s} \max\{\Acc(x(\tau),\max\{0,v(\tau)\},\xl(\tau),\vl(\tau)),-a_{\min}\}\dd\tau\dd s\\
    &\leq{\xl}_{0}+{\vl}_{0}t-l + \tfrac{1}{2}u_{\text{lead}}t^{2}-x_{0}-tv_{0}+\int_{0}^{t}\int_{0}^{s} a_{\min}\dd\tau\dd s\\
    &\leq {\xl}_{0}-x_{0}-l +({\vl}_{0}-v_{0})t +\tfrac{1}{2}(u_{\text{lead}}+a_{\min})t^{2}.
\end{align*}
Obviously, for \(v_{0}\) sufficiently large, we obtain for small time that \(\xl(t)-x(t)<l\).
An extreme case for this is illustrated in \cref{fig:follower_leader_intersecting} where not only the vehicles get closer than \(l\) but the follower (in \textcolor{blue}{blue}) entirely overtakes the leader (in \textcolor{red}{red}).
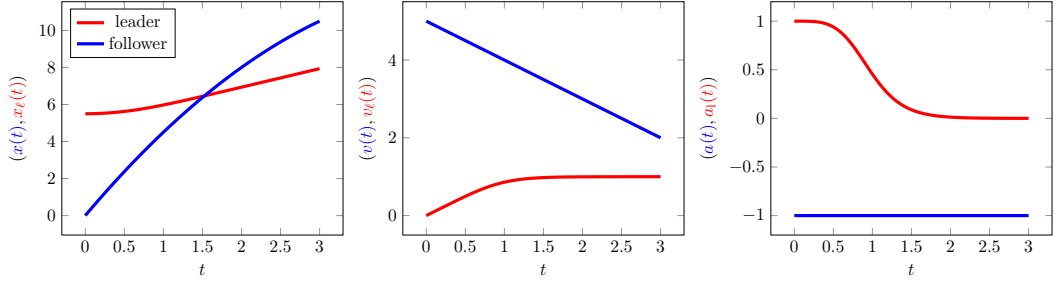
\begin{figure}
\centering
    \begin{tikzpicture}[scale=0.6]
        \begin{axis}[
            width=0.5\textwidth,
            xlabel={\(t\)},
            ylabel={\((\mathcolor{blue}{x(t)},\mathcolor{red}{\xl(t)})\)},
                 legend pos={north west}       ]
        \addplot[color=red,line width=2pt, mark=none]  table [y index=1, x index=0, col sep=comma] {Tikz_data/Example_6_2_1.csv};
        \addplot[color=blue,line width=2pt, mark=none]  table [y index=2, x index=0, col sep=comma] {Tikz_data/Example_6_2_1.csv};
        \addlegendentry{leader};
        \addlegendentry{follower};
        \end{axis}
    \end{tikzpicture}
    \begin{tikzpicture}[scale=0.6]
          \begin{axis}[
            width=0.5\textwidth,
            xlabel={\(t\)},
            ylabel={\((\mathcolor{blue}{v(t)},\mathcolor{red}{\vl(t)})\)},
                 legend pos={north west}       ]
          \addplot[color=red,line width=2pt, mark=none]  table [y index=1, x index=0, col sep=comma] {Tikz_data/Example_6_2_2.csv};
          \addplot[color=blue,line width=2pt, mark=none]  table [y index=2, x index=0, col sep=comma] {Tikz_data/Example_6_2_2.csv};
          \end{axis}
    \end{tikzpicture}
    \begin{tikzpicture}[scale=0.6]
          \begin{axis}[
            width=0.5\textwidth,
            xlabel={\(t\)},
            ylabel={\((\mathcolor{blue}{a(t)},\mathcolor{red}{a_{\text{l}}(t)})\)},
                 legend pos={north west}       ]
          \addplot[color=red,line width=2pt, mark=none]  table [y index=1, x index=0, col sep=comma] {Tikz_data/Example_6_2_3.csv};
          \addplot[color=blue,line width=2pt, mark=none]  table [y index=2, x index=0, col sep=comma] {Tikz_data/Example_6_2_3.csv};
          \end{axis}
    \end{tikzpicture}
        \caption{Illustration of the improvement in \cref{defi:IDM_projected_velocities_acceleration} and its physical unreasonability as shown in \cref{exa:acceleration_projected_IDM_unreasonable}. The parameters are \(a=a_{\min}=1,\ b=2,\ v_{\text{free}}=1,\ \tau=1.6,\ l=4,\ s_{0}=2,\ d=4\) and datum \(x_{0}=0,\ \xlz=l+1.5,\ v_0=5,\ \vlz=0 \).
        \textbf{Left} Vehicles' positions, \textbf{middle} vehicles' velocities, \textbf{right} vehicles' acceleration. 
        The \textcolor{blue}{follower} overtakes the \textcolor{red}{leader} in finite time, although the follower breaks constantly.}
    \label{fig:follower_leader_intersecting}
\end{figure}
\end{example}

\subsection{Another velocity projection improvement}
Another improvement which had been mentioned in the literature in \cite{milanes2014modeling} (however, it goes back to a website of Martin Treiber which is not available anymore, is shortly investigated in this section. Not the entire third part in the IDM acceleration \cref{defi:IDM_acc} is projected\textcolor{blue}{, but} only a specific part. This is detailed as follows:
\begin{definition}[Partially projected acceleration]
Let \cref{ass:input_datum} and \(\Acc\) as in \cref{defi:IDM_acc} be given.
Then, replacing in \cref{defi:IDM_model} the acceleration of the follower in the following way 
\begin{align*}
    \Acc_{\text{ppa}}:\begin{cases}\mathcal{A}&\rightarrow\R\\
    (x,v,\xl,\vl)&\mapsto a-a\Big(\tfrac{|v|}{v_{0}}\Big)^{\delta}-a\Bigg(\tfrac{s_{0}+\max\left\{0,v\tau+\frac{v(\vl-v)}{2\sqrt{ab}}\right\}}{\xl-x-l}\Bigg)^{2}
    \end{cases}
\end{align*}
we call the resulting car following model the IDM with \textbf{p}artially \textbf{p}rojected \textbf{a}cceleration.
\end{definition}
However, as can be seen it does not prevent negative velocity in the case that \(\xl-x-l<s_{0}\) as the acceleration then becomes negative if the current follower's speed is zero, i.e., \(v=0\). Thus, we do not study it further.
\subsection{Velocity regularized acceleration}\label{subsec:velocity_regularized_acceleration}
Another improvement of the IDM is to add a regularization term which will make the third term in the acceleration function \cref{defi:IDM_acc} of the IDM in \cref{defi:IDM_model} become zero if the corresponding velocity approaches zero.
\begin{definition}[IDM with velocity regularized acceleration]\label{defi:IDM_velocity_regularized}
Given \cref{ass:input_datum}, we replace acceleration in \cref{defi:IDM_acc} in \cref{defi:IDM_model} by
\begin{align*}
    \Acc_{\text{vra}}:\begin{cases}\mathcal{A}&\rightarrow\R\\
    (x,v,\xl,\vl)&\mapsto a\bigg(1-\big(\tfrac{|v|}{v_{\text{free}}}\big)^{\delta}-h(v)\Big(\tfrac{2\sqrt{ab}(s_{0}+v\tau)+v(v-\vl)}{2\sqrt{ab}(\xl-x-l)}\Big)^{2}\bigg)
    \end{cases}
\end{align*}
with \(\mathcal{A}\) as in \cref{defi:IDM_acc} and regularization \(h\in W^{1,\infty}(\R;\R_{\geq0})\) be a monotonically increasing function satisfying \(h(0)=0\) and -- for a given \(\epsilon\in\R_{>0}\) -- \(h(v)=1\ \forall v\in\R_{\geq\epsilon}\). We call this the \textbf{velocity regularized acceleration IDM}.
\end{definition}
\begin{theorem}[Existence and Uniqueness of solutions on arbitrary time horizon ]\label{theo:existence_uniqueness_velocity_regularized}
Given \cref{ass:input_datum} and in addition assume that
\[
\exists v_{\min}\in\R_{>0}:\ \vl(t)\geq v_{\min}\ \forall t\in[0,T]\ \wedge\ h(v_{\min})>0
\]
the velocity regularized IDM in \cref{defi:IDM_velocity_regularized} admits on every finite time horizon \(T\in\R_{\geq0}\) a unique solution  satisfying \(x\in W^{2,\infty}((0,T))\)
\[
\xl(t)-x(t)-l\geq \min\Big\{\xlz-x_{0}-l,\sqrt{\tfrac{as_{0}^{2}h(v_{\min})}{-B}},\tfrac{-A+\sqrt{A^{2}+4aBh(v_{\min})s_{0}^{2}}}{2B}\Big\}\qquad\forall t\in[0,T]
\]
with
\begin{equation}
\begin{split}
    A&\: \tfrac{1}{2}(\vlz-v_{0})^{2}- B(\xlz-x_{0}-l)+a\tfrac{h(v_{\min})s_{0}^{2}}{\xlz-x_{0}-l}\\
    B&\: \essinf_{t\in[0,T]}\ul(t)-a
    \end{split}
    \label{defi:A_B_velocity_regularized}
\end{equation}
and additionally 
\[
 \max\{v_{\text{free}}, v_0\}\geqq \dot{x}\equiv v\geqq 0\ \text{ on } [0,T].
\]
If \(v_0 > 0\), it even holds 
\[v(t) >0 \quad \forall t \in [0, T].\]
\end{theorem}
\begin{proof}
To show the well-posedness it suffices to make sure that
\begin{itemize}[leftmargin=15pt]
\item \(\exists C\in\R_{>0}: \xlz - x_0 -l >0 \text{ implies that } \xl(t)-x(t)-l\geq C\textcolor{blue}{,}\ \forall t\in[0,T]\). 
Take the difference of the change of the leader's acceleration and the follower's to obtain and mimic somewhat the proof of \cref{theo:classical_IDM_safety_distance}
\begin{align}
    \ddot{\xl}(t)-\ddot{x}(t)&\geq \ul(t)-a+ah(v(t))\Big(\tfrac{2\sqrt{ab}(s_{0}+v(t)\tau)+v(t)(v(t)-\vl(t))}{2\sqrt{ab}\big(\xl(t)-x(t)-l\big)}\Big)^{2}\notag
    \intertext{assuming without loss of generality \(\vl(t)-v(t)\leq 0\), i.e., \(v(t)\geq \vl(t)\) on a certain time interval and as we have \(\inf_{t}\vl(t)\geq v_{\min}>0\), \(h\geqq 0\)}
    &\geq \essinf_{s\in[0,t]}\ul(s)-a+as_{0}^{2}h\big(v_{\min}\big)\Big(\tfrac{1}{\xl(t)-x(t)-l}\Big)^{2}\notag
    \end{align}
    and multiplying with \(\vl(t)-v(t)=\dot{\xl}(t)-\dot{x}(t)\leq 0\)
\begin{align}
    \Big(\ddot{\xl}(t)-\ddot{x}(t)\Big)(\dot{\xl}(t)-\dot{x}(t))&\leq \essinf_{s\in[0,t]}\ul(s)(\dot{\xl}(t)-\dot{x}(t))\label{eq:velocity_regularized_estimates_1}\\
    &\quad-a(\dot{\xl}(t)-\dot{x}(t))+ah\big(v_{\min}\big)s_{0}^{2}\tfrac{\dot{\xl}(t)-\dot{x}(t)}{\big(\xl(t)-x(t)-l\big)^{2}}.\label{eq:velocity_regularized_estimates_2}
\end{align}
Integrating over \((0,t)\) leads to 
\begin{align*}
    \tfrac{1}{2}(\dot{\xl}(t)-\dot{x}(t))^{2}&\leq \Big(\essinf_{s\in[0,t]} \ul(s)-a\Big)\Big(\big(\xl(t)-x(t)-l\big)-\big(\xlz-x_{0}-l\big)\Big)\\
    &\qquad+\tfrac{1}{2}(\vlz-v_{0})^{2} -ah\big(v_{\min}\big)s_{0}^{2}\Big(\tfrac{1}{\xl(t)-x(t)-l}-\tfrac{1}{\xlz-x_{0}-l}\Big)
    \intertext{or in shorter notation \(g\equivd\xl-x-l\)}
    \tfrac{1}{2}\big(g'(t)\big)^{2}&\leq A + Bg(t)-a\tfrac{h(v_{\min})s_{0}^{2}}{g(t)}
\end{align*}
    with \(A,B\) as in \cref{defi:A_B_velocity_regularized}.
Following the identical steps as in the proof of \cref{theo:classical_IDM_safety_distance}, particularly noticing that the left hand side is always nonnegative while the right hand side would go to \(-\infty\) if \(g\rightarrow 0\) we obtain the claimed lower bound on the distance.
\item \(\dot{x}(t)\geq 0\ \forall t\in[0,T]\). We know by the Picard-Lindel\"of theorem (\cite[Chapter 4]{Arnold} or \cite[Thm 1.3]{Coddington}) that there exists a solution on a significantly small time horizon \([0,T^{*}]\) with \(T^{*}\in\R_{>0}\). Assume that the velocity of a given vehicle \(\dot{x}\) becomes zero at a given time \(T^{**}\in\R_{>0}\). Then, by continuity of the velocity and due to the nonnegativity of the initial velocity there exists a first time \(t\in[0,T^{**})\) so that \(v(t)=\dot{x}(t)=0\). Plugging this into the corresponding acceleration we obtain at that time
\begin{align*}
    \ddot{x}(t)=\dot{v}(t)&=\Acc_{\text{vra}}\big(x(t),\dot{x}(t),\xl(t),\dot{\xl}(t)\big)=\Acc_{\text{vra}}\big(x(t),0,\xl(t),\dot{\xl}(t)\big)=a>0,
\end{align*}
thanks to the assumption on \(h\) in \cref{defi:IDM_velocity_regularized}, namely \(h(0)=0\). This means that whenever the velocity approaches zero, the derivative is strictly positive so that the velocity can never become zero.
Assume that for some \(t\in[0,T]\) we have \(v(t) \geq v_{\text{free}}.\) Then, by the velocity regularized acceleration in \cref{defi:IDM_velocity_regularized}, we have \(\dot{v}(t)<0\) and can conclude
\[v(t) \leq \max\{v_{\text{free}}, v_0\} \quad \forall t \in [0, T].\]
This gives the claim.
\end{itemize}
\end{proof}

\begin{remark}[Proper choice of the regularization \(h\)]
A proper choice for \(h\) consists for \(\epsilon\in\R_{>0}\) of
\begin{equation}
h_{\eps}:\begin{cases}
    \R&\rightarrow [0,1]\\
    v&\mapsto \tfrac{v}{\eps}\mathds{1}_{\R_{\geq0}}(v)\cdot \mathds{1}_{\R_{\leq\eps}}(v) + \mathds{1}_{\R_{>\eps}}(v).
    \end{cases}\label{eq:h}
\end{equation}
This is illustrated in the following \cref{fig:h}. As can be seen this ``saturation'' function is only continuous, resulting in an acceleration function which is not differentiable. Obviously, this could be changed by smoothing \(h_{\eps}\). We do not go into details.
\begin{figure}
    \centering
    \begin{tikzpicture}[scale=0.6]
        \begin{axis}[
            width=0.5\textwidth,
            xlabel={\(x\)},
            ylabel={\(h_{\epsilon}(x)\)},
                 legend pos={north west}       ]
        \addplot[color=blue,line width=1.5pt,domain=-1:0,dashed] {0};
        \addplot[color=blue,line width=1.5pt,domain=0:1,dashed,forget plot] {x};
        \addplot[color=blue,line width=1.5pt,domain=1:1.5,dashed,forget plot] {1};
        \addplot[color=red,line width=1.5pt,domain=-1:0,dotted] {0};
        \addplot[color=red,line width=1.5pt,domain=0:0.5,dotted,forget plot] {2*x};
        \addplot[color=red,line width=1.5pt,domain=0.5:1.5,dotted,forget plot] {1};
        \addplot[color=green,line width=1.5pt,domain=-1:0,dash dot] {0};
        \addplot[color=green,line width=1.5pt,domain=0:0.1,dash dot, forget plot] {10*x};
        \addplot[color=green,line width=1.5pt,domain=0.1:1.5,dash dot,forget plot] {1};
        \addlegendentry{\(\epsilon=1\)};
        \addlegendentry{\(\epsilon=0.5\)};
        \addlegendentry{\(\epsilon=0.1\)};
        \end{axis}
    \end{tikzpicture}
    \caption{Different choices of the ``saturation'' function \(h_{\epsilon}\) for \(\eps\in\{1,0.5,0.1\}\) as suggested in \cref{eq:h}. }
    \label{fig:h}
\end{figure}
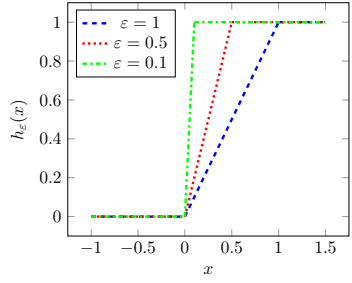
\end{remark}
We illustrate the model in the following
\begin{example}[Velocity regularized acceleration]
As can be seen in \crefrange{fig:velocity_regularized_acceleration_0}{fig:velocity_regularized_acceleration_2}, with this fix the spacing between the two vehicles is not getting large as had been observed in \crefrange{fig:projected_IDM_large_spacing_0}{fig:projected_IDM_large_spacing_2} but the follower speeds up immediately when there is enough safe distance to do so. The clipping due to the function \(h_{\epsilon}\) can be observed in particular in the acceleration profile which is nonsmooth. In all \crefrange{fig:velocity_regularized_acceleration_0}{fig:velocity_regularized_acceleration_2} we choose $\epsilon = 0.1$.
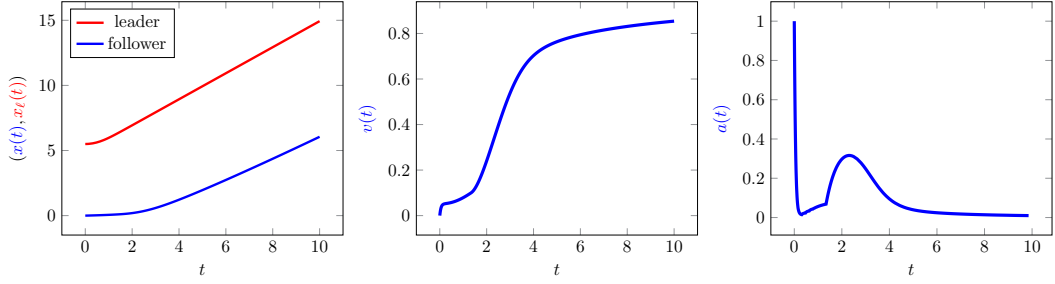
\begin{figure}
\centering
    \begin{tikzpicture}[scale=0.6]
        \begin{axis}[
            width=0.5\textwidth,
            xlabel={\(t\)},
            ylabel={\((\mathcolor{blue}{x(t)},\mathcolor{red}{\xl(t)})\)},
                 legend pos={north west}       ]
        \addplot[color=red,line width=1.5pt, mark=none]  table [y index=1, x index=0, col sep=comma] {Tikz_data/IDM_m2_positions_negative_v.csv};
        \addplot[color=blue,line width=1.5pt, mark=none]  table [y index=2, x index=0, col sep=comma] {Tikz_data/IDM_m2_positions_negative_v.csv};
        \addlegendentry{leader};
        \addlegendentry{follower};
        \end{axis}
    \end{tikzpicture}
      \begin{tikzpicture}[scale=0.6]
          \begin{axis}[
            width=0.5\textwidth,
            xlabel={\(t\)},
            ylabel={\(\mathcolor{blue}{v(t)}\)},
                 legend pos={north west}       ]
          \addplot[color=blue,line width=2pt, mark=none]  table [y index=1, x index=0, col sep=comma] {Tikz_data/IDM_m2_velocity_negative_v.csv};
          \end{axis}
          \end{tikzpicture}
     \begin{tikzpicture}[scale=0.6]
          \begin{axis}[
            width=0.5\textwidth,
            xlabel={\(t\)},
            ylabel={\(\mathcolor{blue}{a(t)}\)},
                 legend pos={north west}       ]
          \addplot[color=blue,line width=2pt, mark=none]  table [y index=1, x index=0, col sep=comma] {Tikz_data/IDM_m2_acc_negative_v.csv};
          \end{axis}
          \end{tikzpicture}
        \caption{The IDM with velocity regularized acceleration as in \cref{defi:IDM_velocity_regularized} and parameters as in \cref{fig:example_negative_speed_1}. Initial data are \(x_{0}=0,\ \xlz=l+1.5<l+s_{0},\ v_{0}=\vlz=0.\) 
        \textbf{Left:} the vehicles' position, \textbf{middle:} the vehicles' velicities, and \textbf{right:} the vehicles acceleration. 
        The \textcolor{red}{leader} follows the free flow acceleration.
        The \textcolor{blue}{follower} stays still and waits until there is a safe space to speed up.}
        \label{fig:velocity_regularized_acceleration_0}
\end{figure}
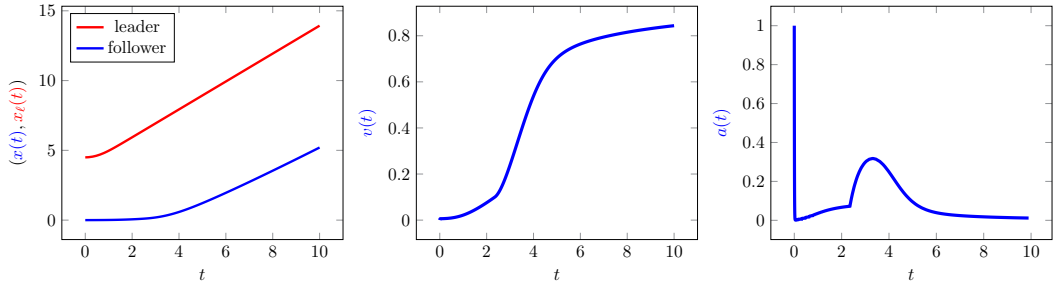
\begin{figure}
\centering
    \begin{tikzpicture}[scale=0.6]
        \begin{axis}[
            width=0.5\textwidth,
            xlabel={\(t\)},
            ylabel={\((\mathcolor{blue}{x(t)},\mathcolor{red}{\xl(t)})\)},
                 legend pos={north west}       ]
        \addplot[color=red,line width=1.5pt, mark=none]  table [y index=1, x index=0, col sep=comma] {Tikz_data/IDM_m2_positions_inf_v.csv};
        \addplot[color=blue,line width=1.5pt, mark=none]  table [y index=2, x index=0, col sep=comma] {Tikz_data/IDM_m2_positions_inf_v.csv};
        \addlegendentry{leader};
        \addlegendentry{follower};
        \end{axis}
    \end{tikzpicture}
      \begin{tikzpicture}[scale=0.6]
          \begin{axis}[
            width=0.5\textwidth,
            xlabel={\(t\)},
            ylabel={\(\mathcolor{blue}{v(t)}\)},
                 legend pos={north west}       ]
          \addplot[color=blue,line width=2pt, mark=none]  table [y index=1, x index=0, col sep=comma] {Tikz_data/IDM_m2_velocity_inf_v.csv};
          \end{axis}
          \end{tikzpicture}
     \begin{tikzpicture}[scale=0.6]
          \begin{axis}[
            width=0.5\textwidth,
            xlabel={\(t\)},
            ylabel={\(\mathcolor{blue}{a(t)}\)},
                 legend pos={north west}       ]
          \addplot[color=blue,line width=2pt, mark=none]  table [y index=1, x index=0, col sep=comma] {Tikz_data/IDM_m2_acc_inf_v.csv};
          \end{axis}
          \end{tikzpicture}
        \caption{The IDM with velocity regularized acceleration, with initial gap equal 0.5 and same parameters as in \cref{fig:example_negative_speed_1}. Both vehicles start with 0 velocity, and the leader follows the free flow IDM dynamics. \textbf{Left:} \(x_{0}=0,\ \xlz=l+0.5<l+s_{0},\ v_{0}=0=\vlz=0\). The follower stays still and waits until there is a safe space to speed up.
    \textbf{Right:} the follower remains still in the initial phase.}
    \label{fig:velocity_regularized_acceleration_1}
\end{figure}

\begin{figure}
\centering
    \begin{tikzpicture}[scale=0.6]
        \begin{axis}[
            width=0.5\textwidth,
            xlabel={\(t\)},
            ylabel={\((\mathcolor{blue}{x(t)},\mathcolor{red}{\xl(t)})\)},
                 legend pos={north west}       ]
        \addplot[color=red,line width=1.5pt, mark=none]  table [y index=1, x index=0, col sep=comma] {Tikz_data/IDM_m2_positions.csv};
        \addplot[color=blue,line width=1.5pt, mark=none]  table [y index=2, x index=0, col sep=comma] {Tikz_data/IDM_m2_positions.csv};
        \addlegendentry{leader};
        \addlegendentry{follower};
        \end{axis}
    \end{tikzpicture}
      \begin{tikzpicture}[scale=0.6]
          \begin{axis}[
            width=0.5\textwidth,
            xlabel={\(t\)},
            ylabel={\(\mathcolor{blue}{v(t)}\)},
                 legend pos={north west}       ]
          \addplot[color=blue,line width=2pt, mark=none]  table [y index=1, x index=0, col sep=comma] {Tikz_data/IDM_m2_velocity.csv};
          \end{axis}
          \end{tikzpicture}
     \begin{tikzpicture}[scale=0.6]
          \begin{axis}[
            width=0.5\textwidth,
            xlabel={\(t\)},
            ylabel={\(\mathcolor{blue}{a(t)}\)},
                 legend pos={north west}       ]
          \addplot[color=blue,line width=2pt, mark=none]  table [y index=1, x index=0, col sep=comma] {Tikz_data/IDM_m2_acc.csv};
          \end{axis}
          \end{tikzpicture}
        \caption{The IDM with velocity regularized acceleration, with initial gap equal 1. Both vehicles start with 0 velocity, and the leader follows the acceleration profile in figure \cref{fig:acc_leader}. \textbf{Left:} \(x_{0}=0,\ \xlz=l+\textcolor{red}{1}<l+s_{0},\ v_{0}=0=\vlz=0\).
    \textbf{Right:} again, the velocity always stays nonnegative.}
    \label{fig:velocity_regularized_acceleration_2}
\end{figure}
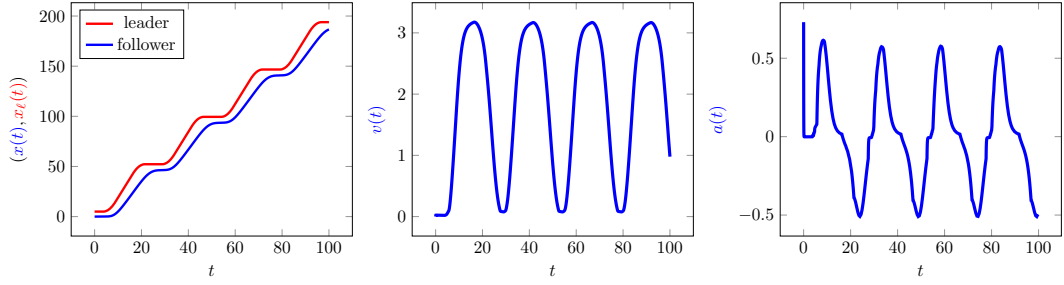
\end{example}
\subsection{A discontinuous improvement to prevent negative velocity}
Our last potential improvement for the IDM which had also been suggested a lot (see for instance \cite{IDM}) is an improvement which will become active only if the velocity becomes zero and the corresponding acceleration at the time where the velocity is zero is negative. In this way, the following improvement is the most natural one. We state it in the following \cref{defi:IDM_discontinuous}.
\begin{definition}[IDM with discontinuous acceleration]\label{defi:IDM_discontinuous}
Let \cref{ass:input_datum} and \(\Acc\) as in \cref{defi:IDM_acc} be given.
Then, replacing in \cref{defi:IDM_model} the acceleration of the follower in the following discontinuous way as
\begin{equation}
\begin{split}
    \dot{v}(t)&=\begin{cases}
                        \Acc(x(t),v(t),\xl(t),\vl(t))   & \text{if } v(t)> 0 \\
                        \Acc(x(t),v(t),\xl(t),\vl(t))     &\text{if }v(t)=0\ \wedge\ \xl(t) -x(t)-l\geq s_{0}\\
                        0 & \text{if }v(t)= 0 \ \wedge\ \xl(t)     -x(t)-l< s_{0}
                \end{cases},
\end{split}
\label{eq:IDM_discontinuous}
\end{equation}
we call this the \textbf{discontinuous IDM}.
\end{definition}
Although one might expect that the introduction of the discontinuity in \cref{eq:IDM_discontinuous} might prohibit a solution to exist for all times or also might destroy uniqueness, it actually does not as the following \cref{theo:IDM_discontinuous} states:
\begin{theorem}[Well-posedness of the discontinuously fixed IDM in \cref{defi:IDM_discontinuous}]\label{theo:IDM_discontinuous}
Given \cref{ass:input_datum} and assuming that the velocity of the leader is only zero at finitely many intervals, the discontinuous improvement of the IDM as in \cref{defi:IDM_discontinuous} admits a unique solution on every finite time horizon \(T\in\R_{>0}\) and satisfies
\[
x\in W^{2,\infty}((0,T)):\ \dot{x}\geqq0.
\]
\end{theorem}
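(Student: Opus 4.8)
The plan is to construct the solution on $[0,T]$ by concatenating two types of phases and then to bound their number. Call a maximal interval on which $v>0$ a \emph{driving phase} and a maximal interval on which $v\equiv 0$ and $\xl-x-l<s_0$ a \emph{waiting phase}. On a driving phase the right-hand side of \eqref{eq:IDM_discontinuous} coincides with the classical IDM of \cref{defi:IDM_model}, which is locally Lipschitz around any state with $\xl-x-l>0$; on a waiting phase the follower equations degenerate to $\dot x\equiv 0$, $\dot v\equiv 0$, while $(x_l,v_l)$ keep following their own (given, Carath\'eodory) dynamics. The argument then splits into (i) a priori bounds forcing $v\ge 0$, $v\le\max\{v_0,v_{\text{free}}\}$ and $\xl-x-l$ bounded away from $0$; (ii) local construction and matching of the phases via Picard-Lindel\"of; and (iii) finiteness of the alternation, which is where the hypothesis on $\vl$ is used.

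For (i), I would first show that $\{v\ge 0\}$ is invariant. At any time with $v=0$, the right-hand side of \eqref{eq:IDM_discontinuous} equals either $0$ (when $\xl-x-l<s_0$) or $\Acc(x,0,\xl,\vl)=a\big(1-(s_0/(\xl-x-l))^2\big)\ge 0$ (when $\xl-x-l\ge s_0$); in either case $\dot v\ge 0$, so $v$ cannot cross $0$ downwards — this is exactly the purpose of the modification. Hence $\dot x=v\ge 0$ on $[0,T]$. The upper bound $v\le\max\{v_0,v_{\text{free}}\}$ follows verbatim from \cref{lem:boundedness_velocity}, whose proof only invokes $\Acc$ for $v\ge v_{\text{free}}>0$ and is therefore unaffected by the modification at $v=0$. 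For the gap: on a driving phase the dynamics is the classical IDM, so \cref{theo:lower_bounds} applies with $v_{\max}$ the bound just obtained; on a waiting phase $\xl-x-l$ is non-decreasing because $\vl\ge 0=v$; and whenever $v>0$ while $\xl-x-l<s_0$ one has $\dot v=\Acc\le a\big(1-(s_0/(\xl-x-l))^2\big)$, which tends to $-\infty$ as the gap shrinks, so $v$ is driven back to $0$ — and the follower into a waiting phase in which the gap rebounds — long before the gap can reach $0$. Thus $\xl-x-l$ is continuous and strictly positive on the compact $[0,T]$, hence bounded below by some $c>0$; together with the velocity bound this makes $\dot v=\ddot x$ essentially bounded, which is what is needed for $x\in W^{2,\infty}((0,T))$.

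For (ii), I proceed from $t=0$ and, inductively, from the right endpoint of each completed phase. If the current velocity is positive, or is $0$ with $\xl-x-l>s_0$, the field is locally Lipschitz near the current (by (i), admissible) state, so \cref{theo:well_posedness_small_time_horizon} and Picard-Lindel\"of give a unique local solution; being a classical IDM trajectory, it can only leave this regime at a time where $v$ returns to $0$, at which instant one reads off the gap and either continues with $\dot v=\Acc\ge 0$ or opens a waiting phase. If instead the current velocity is $0$ with $\xl-x-l<s_0$, the unique local solution has $x$ frozen and $v\equiv 0$; this waiting phase ends at the first time the (non-decreasing) gap equals $s_0$, a well-defined time which is finite unless the gap never reaches $s_0$ on $[0,T]$, in which case the solution is already complete. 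Matching across a phase boundary is continuous in $(x,v,x_l,v_l)$ and the transition times are uniquely determined, so the pieces glue to an absolutely continuous function that solves \eqref{eq:IDM_discontinuous} and is unique on each phase.

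The main obstacle is (iii): the procedure in (ii) yields, a priori, a possibly infinite sequence of switching times $t_1<t_2<\dots$ in $[0,T]$, and one must rule out (or absorb) an accumulation point $t_\infty<T$. This is where the assumption that $\vl$ vanishes only on finitely many intervals is decisive: inside each such interval a follower that has entered a waiting phase cannot restart (the gap does not grow), so the waiting phases do not fragment there, while on each of the finitely many complementary intervals where $\vl>0$ the gap lower bound $c>0$ and the velocity bound make the driving dynamics uniformly Lipschitz, and quantitative estimates in the spirit of the proof of \cref{theo:lower_bounds} bound the number of alternations. Should a switching sequence still accumulate at $t_\infty$, one checks, using $v\ge 0$, the gap bound and the fact that waiting phases terminate precisely when the gap equals $s_0$, that $v(t_\infty)=0$ and $\xl(t_\infty)-x(t_\infty)-l=s_0$; the field is locally Lipschitz at that state, so the solution continues uniquely past $t_\infty$. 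In all cases $[0,T]$ is exhausted, the glued function is the unique solution, and by (i) it satisfies $x\in W^{2,\infty}((0,T))$ and $\dot x\ge 0$.
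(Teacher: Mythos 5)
Your proposal follows essentially the same route as the paper's proof: a case/phase analysis separating intervals where the classical IDM field is active (and locally Lipschitz) from intervals where the follower is frozen, with local solutions from Picard--Lindel\"of glued at the switching times and the observation that the system can only move between the three regimes of \cref{eq:IDM_discontinuous} in a restricted order. Where you genuinely add something is your step (iii): the paper's argument never addresses whether the switching times could accumulate before $T$ (Zeno behaviour), nor does it say where the hypothesis that $\vl$ vanishes only on finitely many intervals enters; your explicit treatment of a possible accumulation point $t_\infty$ and your identification of that hypothesis as the ingredient controlling the alternation fill exactly the gap left open in the published proof, so this part is a strengthening rather than a detour. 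Two imprecisions to repair in step (i): the estimate $\Acc\le a\bigl(1-(s_0/(\xl-x-l))^2\bigr)$ that you use to show the gap cannot collapse requires $v\bigl(2\sqrt{ab}\,\tau+v-\vl\bigr)\ge 0$, which holds when the gap is actually decreasing (i.e.\ $v>\vl$) but not for arbitrary $0<v<\vl$, so you should restrict that inequality to the decreasing regime exactly as in the proof of \cref{theo:lower_bounds}; and invoking \cref{theo:lower_bounds} on a driving phase presupposes its hypothesis $\xl_{0}-x_{0}-l>\varepsilon_0^{*}$, which \cref{ass:input_datum} does not guarantee for the initial phase (it only gives a positive gap), so for small initial gaps you need the same unbounded-braking argument with a data-dependent lower bound rather than the fixed $\varepsilon_0^{*}$. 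Neither point changes the structure of your argument, and both are also passed over silently in the paper's own proof.
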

In addition, the lower bound on the distance between follower and leader as in \cref{theo:classical_IDM_safety_distance} holds.
\begin{proof}

We consider several different cases:
\begin{itemize}[leftmargin=15pt]
\item Assume that at \(t=0\) we have \(v(0)=c > 0\). Obviously, for small time the solution of the system is unique as the right hand side is locally Lipschitz-continuous then. Either, \(v(t)>0\ \forall t\in[0,T]\). Then, there is nothing more to do as we never run into the discontinuity or \(\exists t^{*}\in(0,T]: v(t^{*})=0\). We distinguish two cases:
\begin{description}
    \item[\(\xl(t^{*})-x(t^{*})-l\geq s_{0}\):] However, in this case the right hand side has not changed so that the solution still exists and is unique. As at that time, the velocity is zero and as the leading car never moves backwards (by the assumption on the leader's velocity/acceleration, we can never go into the third case where it would hold \(\xl(t)-x(t)-l<s_{0}\) but only back into the first case with strictly positive velocity. In this case, the solution exists and is unique.
    \item[\(\xl(t^{*})-x(t^{*})-l< s_{0}\):] Then, the velocity of the follower is zero and we can only switch into the second case when the leader's position \(\xl\) increases, we are automatically left with the case that either we stay in the third case or that we move back to the second case. The second case, however, was already treated previously.
\end{description}
\item Assume that we have \(v(0)=0\). Then, we are either in item one or two of the previous case and are done.
\end{itemize}
As all of these changes only depend on the leader's trajectory \(\xl\) and velocity \(\vl\) which is given independent on the status of the follower, and we can conclude the existence and uniqueness of a solution.

The lower bound on the distance follows by the argument that the distance between follower and leader can only decrease if the followers velocity is not zero. However, then, we are in the ``classical'' IDM model and can apply exactly the reasoning of the proof of \cref{theo:classical_IDM_safety_distance}.
\end{proof}
\begin{example}[The discontinuous improvement]
In \crefrange{fig:discontinuous_0}{fig:discontinuous_2} we illustrate the dynamics of leader and follower for the discontinuous improvement proposed in \cref{theo:IDM_discontinuous}. Here, the velocity can become zero and stay zero for an amount of time dependent on the leaders velocity/position, but remains nonnegative. If the spacing between the two cars is large enough, the follower immediately speeds up and does not let the gap increase too much.

\begin{figure}
\centering
    \begin{tikzpicture}[scale=0.6]
        \begin{axis}[
            width=0.5\textwidth,
            xlabel={\(t\)},
            ylabel={\((\mathcolor{blue}{x(t)},\mathcolor{red}{\xl(t)})\)},
                 legend pos={north west}       ]
        \addplot[color=red,line width=1.5pt, mark=none]  table [y index=1, x index=0, col sep=comma] {Tikz_data/IDM_m4_positions_negative_v.csv};
        \addplot[color=blue,line width=1.5pt, mark=none]  table [y index=2, x index=0, col sep=comma] {Tikz_data/IDM_m4_positions_negative_v.csv};
        \addlegendentry{leader};
        \addlegendentry{follower};
        \end{axis}
    \end{tikzpicture}
      \begin{tikzpicture}[scale=0.6]
          \begin{axis}[
            width=0.5\textwidth,
            xlabel={\(t\)},
            ylabel={\(\mathcolor{blue}{v(t)}\)},
                 legend pos={north west}       ]
          \addplot[color=blue,line width=2pt, mark=none]  table [y index=1, x index=0, col sep=comma] {Tikz_data/IDM_m4_velocity_negative_v.csv};
          \end{axis}
          \end{tikzpicture}
     \begin{tikzpicture}[scale=0.6]
          \begin{axis}[
            width=0.5\textwidth,
            xlabel={\(t\)},
            ylabel={\(\mathcolor{blue}{a(t)}\)},
                 legend pos={north west}       ]
          \addplot[color=blue,line width=2pt, mark=none]  table [y index=1, x index=0, col sep=comma] {Tikz_data/IDM_m4_acc_negative_v.csv};
          \end{axis}
          \end{tikzpicture}
        \caption{The discontinuous improvement of the IDM, with initial gap equal 1.5 and same parameters as in \cref{fig:example_negative_speed_1}. Both vehicles start with 0 velocity, and the leader follows the acceleration profile in figure \cref{fig:acc_leader}. \textbf{Left:} \(x_{0}=0,\ \xlz=l+1.5<l+s_{0},\ v_{0}=0=\vlz=0\).
    \textbf{Right:} again, the velocity always stays nonnegative.}
    \label{fig:discontinuous_0}
\end{figure}
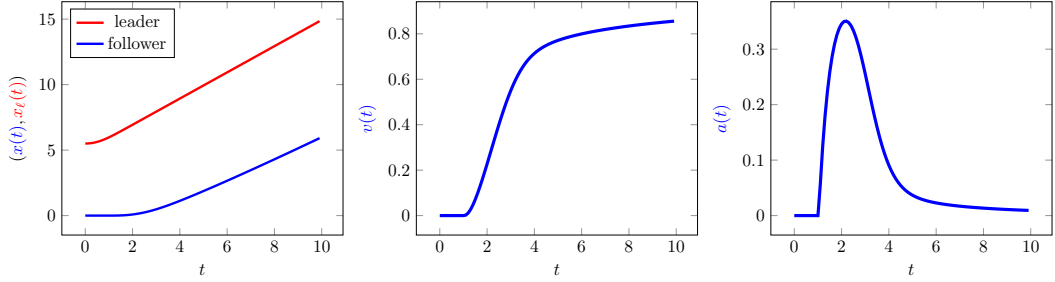

\begin{figure}
\centering
    \begin{tikzpicture}[scale=0.6]
        \begin{axis}[
            width=0.5\textwidth,
            xlabel={\(t\)},
            ylabel={\((\mathcolor{blue}{x(t)},\mathcolor{red}{\xl(t)})\)},
                 legend pos={north west}       ]
        \addplot[color=red,line width=1.5pt, mark=none]  table [y index=1, x index=0, col sep=comma] {Tikz_data/IDM_m4_positions_inf_v.csv};
        \addplot[color=blue,line width=1.5pt, mark=none]  table [y index=2, x index=0, col sep=comma] {Tikz_data/IDM_m4_positions_inf_v.csv};
        \addlegendentry{leader};
        \addlegendentry{follower};
        \end{axis}
    \end{tikzpicture}
      \begin{tikzpicture}[scale=0.6]
          \begin{axis}[
            width=0.5\textwidth,
            xlabel={\(t\)},
            ylabel={\(\mathcolor{blue}{v(t)}\)},
                 legend pos={north west}       ]
          \addplot[color=blue,line width=2pt, mark=none]  table [y index=1, x index=0, col sep=comma] {Tikz_data/IDM_m4_velocity_inf_v.csv};
          \end{axis}
          \end{tikzpicture}
     \begin{tikzpicture}[scale=0.6]
          \begin{axis}[
            width=0.5\textwidth,
            xlabel={\(t\)},
            ylabel={\(\mathcolor{blue}{a(t)}\)},
                 legend pos={north west}       ]
          \addplot[color=blue,line width=2pt, mark=none]  table [y index=1, x index=0, col sep=comma] {Tikz_data/IDM_m4_acc_inf_v.csv};
          \end{axis}
          \end{tikzpicture}
        \caption{The discontinuous improvement of the IDM, with initial gap equal 0.5 and same parameters as in \cref{fig:example_negative_speed_1}. Both vehicles start with 0 velocity, and the leader follows the acceleration profile in figure \cref{fig:acc_leader}. \textbf{Left:} \(x_{0}=0,\ \xlz=l+0.5<l+s_{0},\ v_{0}=\vlz=0\).
    \textbf{Right:} again, the velocity always stays nonnegative.}
    \label{fig:discontinuous_1}
\end{figure}

\begin{figure}
\centering
    \begin{tikzpicture}[scale=0.6]
        \begin{axis}[
            width=0.5\textwidth,
            xlabel={\(t\)},
            ylabel={\((\mathcolor{blue}{x(t)},\mathcolor{red}{\xl(t)})\)},
                 legend pos={north west}       ]
        \addplot[color=red,line width=1.5pt, mark=none]  table [y index=1, x index=0, col sep=comma] {Tikz_data/IDM_m4_positions.csv};
        \addplot[color=blue,line width=1.5pt, mark=none]  table [y index=2, x index=0, col sep=comma] {Tikz_data/IDM_m4_positions.csv};
        \addlegendentry{leader};
        \addlegendentry{follower};
        \end{axis}
    \end{tikzpicture}
      \begin{tikzpicture}[scale=0.6]
          \begin{axis}[
            width=0.5\textwidth,
            xlabel={\(t\)},
            ylabel={\(\mathcolor{blue}{v(t)}\)},
                 legend pos={north west}       ]
          \addplot[color=blue,line width=2pt, mark=none]  table [y index=1, x index=0, col sep=comma] {Tikz_data/IDM_m4_velocity.csv};
          \end{axis}
          \end{tikzpicture}
     \begin{tikzpicture}[scale=0.6]
          \begin{axis}[
            width=0.5\textwidth,
            xlabel={\(t\)},
            ylabel={\(\mathcolor{blue}{a(t)}\)},
                 legend pos={north west}       ]
          \addplot[color=blue,line width=2pt, mark=none]  table [y index=1, x index=0, col sep=comma] {Tikz_data/IDM_m4_acc.csv};
          \end{axis}
          \end{tikzpicture}
        \caption{The discontinuous improvement of the IDM, with initial gap equal 1. Both vehicles start with 0 velocity, and the leader follows the acceleration profile in figure \cref{fig:acc_leader}. \textbf{Left:} \(x_{0}=0,\ \xlz=l+\textcolor{red}{1}<l+s_{0},\ v_{0}=\vlz=0\).
    \textbf{Middle:} again, the velocity always stays nonnegative. \textbf{Right:} acceleration}
    \label{fig:discontinuous_2}
\end{figure}
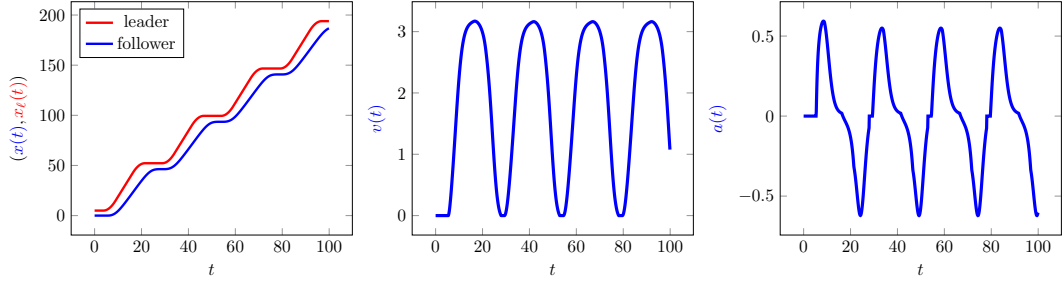

\end{example}

\subsection{Comparisons among modifications}

We herein summarize the strength and weakness of each modification.
\begin{itemize}
\item The acceleration projected IDM in \cref{defi:IDM_projected_velocities_acceleration} is the most straightforward modification to fix the negative velocities. However as shown in \cref{fig:projected_IDM_large_spacing_0,fig:projected_IDM_large_spacing}, it suffers from the fact that the follower waits too long to start driving. It may also lead to physical unreasonability issue with certain initial datum as we point out in \cref{exa:acceleration_projected_IDM_unreasonable}.

\item The velocity regularized IDM in \cref{defi:IDM_velocity_regularized} is capable of preventing negative velocity, and the follower speeds up immediately when it observes safe distance. However in this modification one needs to specify the regularization functions $h$, which might require extra parameter tuning.

\item The discontinuous improvement of the IDM as in \cref{defi:IDM_discontinuous} prohibits negative velocities, without the need to introduce extra saturation functions. 
\end{itemize}

In order to quantitatively compare the modifications, we compute the average distance (as well as the variance) between the leader and follower over time in \cref{tab:comparisons_among_modifications} on three different scenarios we tested. As it turns out the differences between the latter two modifications are minimal in all scenarios and both mean/variance. On the other hand, the acceleration projected IDM has greater average distance and variance, verifying the potential issue that with the modification, the following vehicle waits for too long to start driving, and reacts to the leader's velocity change in a slower manner.

\begin{table}
\centering
	\caption{The average distance between the leader and follower (and the variance reported in the parenthesis)}
    	\begin{tabular}{l||r||r|
    	| r}
		\textbf{Average distance (variance)} & \cref{item:case_1}& \cref{item:case_2}& \cref{item:case_3}\\
		\hline\hline
		acceleration projected IDM \cref{defi:IDM_projected_velocities_acceleration}& 7.99 (1.19)   & 8.09 (3.55)    &14.94 (54.99)\\
		velocity regularized IDM \cref{defi:IDM_velocity_regularized}& 7.76 (1.00)   &   7.24 (1.70) & 12.81 (25.51)\\
		discontinuous IDM \cref{defi:IDM_discontinuous}& 7.75 (1.01) &7.31 (1.75) & 12.39 (25.18)\\
    \end{tabular}
	\label{tab:comparisons_among_modifications}
\end{table}

\section{Generalization to many cars}\label{sec:multi_vehicle}
In the proposed framework, we have only studied the case where we have one leader and one follower and the leader (in most cases) satisfies an acceleration profile where their velocity is nonnegative. However, as this is somewhat arbitrary, all proposed results and all ``improvements'' remain valid as long as the velocity of the follower remains non-negative when we generalize this to more than two cars. We make this precise in the following \cref{defi:car_following_multi} but first introduce the number of cars as well as some physical reasonable assumption on the input datum:
\begin{assumption}[Input datum for multiple cars]\label{ass:input_datum_many_cars}
Let \(N\in\N_{\geq1}\) be given. We assume that the parameters of the \(\Acc\) satisfy what we have assumed in \cref{ass:input_datum} as well as the leaders acceleration \(u_{\text{lead}}\). Additionally, we assume for the initial datum (position and velocity)
\[
\bx_{0}\in\R^{N}: \bx_{0,i}-\bx_{0,i+1}> l\ \forall i\in\{1,\ldots,N-1\}\ \wedge\ \bv_{0}\in\R_{\geq0}^{N}.
\]
\end{assumption}

\begin{definition}[The car-following model for many cars]\label{defi:car_following_multi}
Given \cref{ass:input_datum_many_cars}, \cref{defi:IDM_acc} and \cref{ass:input_datum}. Then, for \(\ul\in\mathcal{U}_{\text{lead}}\) the dynamics for the IDM with many vehicles \(N\in\N_{\geq1}\) read in \((\bx,\bv):[0,T]\rightarrow\R^{N}\times\R^{N}\) as
\begin{equation}
\begin{aligned}
    \dot{\bx}_{1}(t)&=\bv_{1}(t),&&&& t\in[0,T]\\
    \dot{\bv}_{1}(t)&=\ul(t),&&&& t\in[0,T]\\
    \dot{\bx}_{i}(t)&=\bv_{i}(t),&& i\in\{2,\ldots,N\} && t\in[0,T]\\
    \dot{\bv}_{i}(t)&=\Acc(\bx_{i}(t),\bv_{i}(t),\bx_{i-1}(t),\bv_{i-1}(t)), && i\in\{2, \ldots, N\}&& t\in[0,T] \\
    \bx(0) &=\bx_{0},\\
    \bv(0) &= \bv_{0}.
\end{aligned}
\end{equation}
\end{definition}
The system for many cars is illustrated in \cref{fig:car_following_multi}.
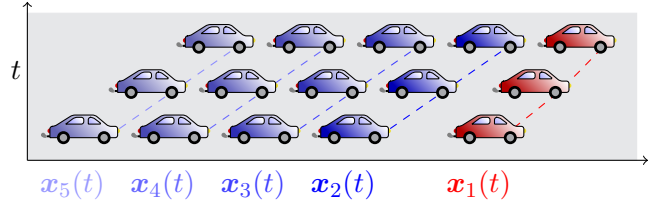
\begin{figure}
    \centering
\begin{tikzpicture}[scale=0.6]
 \draw[fill,color=gray!20!white](-6.5,-1)--(-6.5,2.25)--(7,2.25)--(7,-1)--cycle;
 \draw[->] (-6.5,-1) -- (-6.5,2.5);
\draw (-6.75,1) node {\(t\)};
\draw[->] (-6.5,-1) -- (7.25,-1);
\draw[dashed,blue] (1.25,-0.5) -- (4.5,1.75);
\draw[dashed,blue!75!white] (-.75,-0.5) -- (2.5,1.75);
\draw[dashed,blue!60!white] (-2.75,-0.5) -- (.5,1.75);
\draw[dashed,blue!45!white] (-4.75,-0.5) -- (-1.5,1.75);
\draw [dashed,red] plot [smooth] coordinates { (4,-0.5) (5.25,0.5) (6.25,1.5)};
\draw (3.5,-1) node[below] {\textcolor{red}{\(\bx_{1}(t)\)}};
\draw (0.5,-1) node[below] {\textcolor{blue}{\(\bx_{2}(t)\)}};
\draw (-1.5,-1) node[below] {\textcolor{blue!75!white}{\(\bx_{3}(t)\)}};
\draw (-3.5,-1) node[below] {\textcolor{blue!60!white}{\(\bx_{4}(t)\)}};
\draw (-5.5,-1) node[below] {\textcolor{blue!40!white}{\(\bx_{5}(t)\)}};
\begin{scope}[shift={(1.5,0)}]\car{blue}\end{scope}
  \begin{scope}[shift={(5,1)}]\car{red}\end{scope}
 \begin{scope}[shift={(3,1)}]\car{blue}\end{scope}
 \begin{scope}[shift={(4,0)}]\car{red}\end{scope}
 \begin{scope}[shift={(0,-1)}]\car{blue}\end{scope}
 \begin{scope}[shift={(3,-1)}]\car{red}\end{scope}
 \begin{scope}[shift={(-2,-1)}]\car{blue!75!white}\end{scope}
 \begin{scope}[shift={(-0.5,0)}]\car{blue!75!white}\end{scope}
 \begin{scope}[shift={(1,1)}]\car{blue!75!white}\end{scope}
  \begin{scope}[shift={(-4,-1)}]\car{blue!60!white}\end{scope}
 \begin{scope}[shift={(-2.5,0)}]\car{blue!60!white}\end{scope}
 \begin{scope}[shift={(-1,1)}]\car{blue!60!white}\end{scope}
 \begin{scope}[shift={(-6,-1)}]\car{blue!45!white}\end{scope}
 \begin{scope}[shift={(-4.5,0)}]\car{blue!45!white}\end{scope}
 \begin{scope}[shift={(-3,1)}]\car{blue!45!white}\end{scope}
 \end{tikzpicture}
\caption{The leader \textcolor{red}{\(\bx_{1}(t)\)} with its dynamics determined by the acceleration \(u_{\text{lead}}(t)\) and the following cars \textcolor{blue}{\(\bx_{2}(t)\)}, \textcolor{blue!75!white}{\(\bx_{3}(t)\)},  \textcolor{blue!60!white}{\(\bx_{4}(t)\)}, \textcolor{blue!45!white}{\(\bx_{5}(t)\)} with its dynamics governed by the classical IDM \cref{defi:IDM_model}. The overall dynamics is stated in \cref{defi:car_following_multi}.}
    \label{fig:car_following_multi}
\end{figure}
Having the definition, we obtain the following general result on the well-posedness when applying the proper acceleration functions -- IDM improvements -- introduced before:
\begin{theorem}[Well-posedness of some of the previously discussed models]
Let \cref{ass:input_datum_many_cars} hold and
consider as ``improvement'' of the IDM either the 
\begin{itemize}[leftmargin=15pt]
    \item the \textbf{velocity regularized acceleration} in \cref{defi:IDM_velocity_regularized} with the additional assumption that there exists \(v_{\text{min}} \in \R_{>0}\) such that \(\vl \geqq v_{\min}\) on \([0, T]\) and \(\bv_0 \in \R_{>0}^N\)
    \item the \textbf{discontinuous improvement} in \cref{defi:IDM_discontinuous}.
\end{itemize}
Then, the system of \(2N\) coupled initial value problems (\(N\in\N_{\geq1}\)) admits a unique solution on every given time horizon \(T\in\R_{>0}\) and the solution satisfies for the \textbf{velocity regularized acceleration}
\begin{align*}
    \bx\in W^{2,\infty}\big((0,T);\R^{N}\big):\ \dot{\bx}\equivd\bv>\boldsymbol{0} \text{ on } [0,T]
\end{align*}
and for the \textbf{discontinuous improvement}
\begin{align*}
    \bx\in W^{2,\infty}\big((0,T);\R^{N}\big):\ \dot{\bx}\equivd\bv\geqq\boldsymbol{0} \text{ on } [0,T].
\end{align*}
\end{theorem}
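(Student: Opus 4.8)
The plan is to exploit the \emph{triangular (cascade) structure} of the $2N$ coupled equations: car $1$ obeys $\dot{\bx}_1=\bv_1,\ \dot{\bv}_1=\ul$ with $\ul$ prescribed, and for $i\in\{2,\dots,N\}$ the equations for $(\bx_i,\bv_i)$ involve only $(\bx_i,\bv_i)$ together with $(\bx_{i-1},\bv_{i-1})$. Hence one can solve for the cars successively, $i=1,2,\dots,N$, and at stage $i$ one is exactly in the two–vehicle situation of the earlier theorems, with the ``leader'' being car $i-1$. So first I would set up an induction on $i$: the inductive hypothesis is that cars $1,\dots,i$ admit a unique solution on $[0,T]$ with $\bv_j\geqq 0$ and $\dot{\bv}_j\in L^\infty((0,T))$ for all $j\le i$, and with $\bx_{j-1}-\bx_j-l>0$ on $[0,T]$ for $2\le j\le i$.

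The base case $i=1$ is immediate: \cref{ass:input_datum_many_cars} forces $\ul\in\mathcal U_{\text{lead}}$, so $\bv_1(t)=\bv_{0,1}+\int_0^t\ul(s)\dd s\geqq 0$, and since $\ul\in L^\infty$ and $T<\infty$, $\bv_1$ is bounded and $\bx_1\in W^{2,\infty}((0,T))$. For the inductive step, assume cars $1,\dots,i-1$ are already constructed with these properties. I would then view car $i-1$ as the leader for car $i$: by hypothesis $\bv_{i-1}\geqq 0$ and $\dot{\bv}_{i-1}\in L^\infty$, so taking ``$\ul$'' $:=\dot{\bv}_{i-1}$ and ``$\vl_0$'' $:=\bv_{0,i-1}$ the admissibility $\ul\in\mathcal U_{\text{lead}}$ holds; combined with $\bx_{0,i}<\bx_{0,i-1}-l$ and $\bv_{0,i}\ge 0$ from \cref{ass:input_datum_many_cars}, the hypotheses of \cref{ass:input_datum} are met for the pair $(i-1,i)$. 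Now I would invoke the two–vehicle global well-posedness theorem for the modification in force --- \cref{theo:existence_uniqueness_velocity_regularized}, \cref{theo:existence_uniqueness_distance_regularized}, or \cref{theo:IDM_discontinuous} --- to obtain a unique $\bx_i\in W^{2,\infty}((0,T))$ with $\dot{\bx}_i\equiv\bv_i\geqq 0$. Since $\bx_i\in W^{2,\infty}$, the acceleration $\dot{\bv}_i$ is essentially bounded; in particular the follower never reaches the boundary of the admissible set $\mathcal A$, i.e.\ $\bx_{i-1}-\bx_i-l>0$ on $[0,T]$ (and, via the mechanism underlying \cref{theo:lower_bounds} together with the boundedness of $\bv_i$ from \cref{lem:boundedness_velocity}, this gap is even bounded below by a positive constant, which preserves the ordering of the whole platoon and rules out any collision). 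This closes the induction; taking $i=N$ gives existence. Uniqueness of the full trajectory is inherited from the stage-by-stage uniqueness together with the one-way dependence --- car $j$ is uniquely pinned down once cars $1,\dots,j-1$ are --- and the conclusions $\bx\in W^{2,\infty}((0,T);\R^N)$ and $\dot{\bx}\equivd\bv\geqq\boldsymbol 0$ are then just the componentwise assembly.

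The step I expect to be the main obstacle is the discontinuous improvement. \cref{theo:IDM_discontinuous} is stated under the extra assumption that the leader's velocity vanishes on only finitely many intervals: for car $2$ this is a condition on $\ul$ (it should be read as part of, or a consequence of, \cref{ass:input_datum_many_cars} in the discontinuous case), but for cars $3,\dots,N$ it has to be \emph{propagated}. I would therefore need to show that if the ``leader'' velocity has finitely many zero-intervals, then so does the velocity generated by \cref{eq:IDM_discontinuous} for the follower. This is believable --- the follower rests only while $\bx_{i-1}-\bx_i-l<s_0$, and each rest episode is tied to a phase of the leader's motion --- but excluding the possibility of infinitely many stop–and–go episodes of a single car on a finite horizon is the genuinely nontrivial point of the argument. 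For the two smooth regularizations (\cref{defi:IDM_velocity_regularized} and \cref{defi:IDM_distance_regularized1}) no such complication arises, the right-hand sides being locally Lipschitz on $\mathcal A$, and the cascade goes through unchanged; the same scheme would moreover cover the case in which different followers use different admissible modifications, since only the availability of a two-vehicle theorem for each car is used.
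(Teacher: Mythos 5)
Your proposal follows essentially the same route as the paper: an induction along the one-directionally coupled cascade, treating car $i-1$ as the leader for car $i$ and invoking the two-vehicle theorems (\cref{theo:existence_uniqueness_velocity_regularized}, \cref{theo:existence_uniqueness_distance_regularized}, \cref{theo:IDM_discontinuous}) at each stage, so the argument is correct and matches the paper's. Your remark that the finitely-many-zero-intervals hypothesis of \cref{theo:IDM_discontinuous} must be propagated from each car to the next is a genuine subtlety that the paper's own proof silently skips, so you are in fact being more careful than the source on that point.
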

\begin{proof}
The proof consists of recalling that the proofs of the corresponding improvements all work for any leaders acceleration proposed in \cref{ass:input_datum_many_cars} as long as the corresponding velocity would remain nonnegative/strictly positive.
As the dynamics are only ``one-directionally'' coupled, i.e. \ the dynamics of the follower depend on the leader but not vice versa, we can use an
inductive argument and first look on the dynamics for \(\bx_{1},\bx_{2}\). According to \cref{theo:existence_uniqueness_velocity_regularized} and \cref{theo:IDM_discontinuous} we obtain the well-posedness. However, additionally also that the velocity \(\bv_{2}\) is nonnegative/strictly greater than zero. As a next step, we can thus consider the dynamics of \(\bx_{2},\bx_{3}\). As \(\bx_{2}\) satisfies the non-negativity/strict positivity of the velocity we can again apply the stated well-posedness results for two vehicles as in \cref{theo:existence_uniqueness_velocity_regularized} and \cref{theo:IDM_discontinuous}. This procedure can be iterated until we have reached \(\bx_{N}\) and we conclude with the existence and uniqueness of solutions and the nonnegativity/strict positivity of the velocities.
\end{proof}

\section{Conclusions and future work}\label{sec:conclusions}
In this contribution, we have demonstrated the ill-posedness of the rather often used \textbf{intelligent driver model} (IDM) for specific initial datum and have presented some improvements to avoid these problems.
The proposed work builds a solid foundation for future work on
\textbf{1)} Multi-lane traffic with lane-changing. The lane-changing requires to know under which condition a car can change lane and how the well-posedness is affected by the lane change. For optimizing lane-changing to smooth traffic and avoid stop and go waves (thus saving energy) we will next consider a suitable hybrid optimal control problem based on the proposed dynamics.
\textbf{2)} Signalized junctions/intersection modelled with a form of the IDM also requires the proposed well-posedness in particular as a red traffic light necessitates the vehicles in front of it to stop (velocity becomes zero).
\textbf{3)} Stability of solutions with regard to the model parameters and comparison of stability for the different suggested improvements.
\textbf{4)}
Implementations of the models presented in this article. While the numerical work presented above results from \texttt{matlab} \cite{MATLAB} implementations and the use of \texttt{ODE45} and similar routines, it would be interesting to also theoretically study the discretization of these equations with standard finite difference schemes to see what guarantees can be provided for the numerical solutions (for example order of the numerical schemes, error bounds on the numerical solution etc.). 
\textbf{5)} It would also be of great interest for the improvements of the IDM model presented here to validate them against field data, or to see if the resulting microsimulation implementations (SUMO \cite{SUMO2018}, Aimsun \cite{aimsun}, etc.) match experimental data for specific choices of numerical parameters. In particular, it would be interesting to measure discrepancies with the prior IDM improvements in their implementation, and compare their respective performances. 
\textbf{6)} Finally, obtaining a general well-posedness result for the original IDM when restricting specific parameters and input data would be important and will be subject to further study to justify many already available simulation results a posteriori.
\section*{Acknowledgements}
We would like to thank the reviewers for their suggestions for improvement. In particular the suggestions led to a rewriting of \cref{sec:IDM} and a well-posedness result for the classical IDM.

 This material is based upon work supported by the National Science Foundation under Grant Numbers CNS-1837244 (A.~Bayen), CNS-1837481 (B.~Piccoli).
 The research is based upon work supported by the U.S.~Department of Energy’s Office of Energy Efficiency and Renewable Energy (EERE) under the Vehicle Technologies Office award number CID DE-EE0008872.
 The views expressed herein do not necessarily represent the views of the U.S.~Department of Energy or the United States Government.
\bibliographystyle{plain}
\bibliography{IDM.bib}

\end{document}